\def\s{{S}\xspace}
\def\sets{{\cal B}\xspace}
\def\rank{{\rm rank}\xspace}
\def\l{{\lambda}\xspace}
\def\avg{{\rm avg}\xspace}
\def\med{{\rm med}\xspace}
\def\2max{{\rm 2ndMax}\xspace}
\def\eps{\varepsilon}\xspace
\def\expb{\textsc{ExponentialBucket}\xspace}
\def\varb{\textsc{VariableBucket}\xspace}
\def\insert{\textsc{Insert}\xspace}
\def\compress{\textsc{Compress}\xspace}
\long\def\cut#1{{}}
\newcommand\comm[1]{\typeout{Used \string\comm...}\quad{\bf [[#1]]}\quad}
\newtheorem{theorem}{Theorem}
\begin{document}

\title{Untangling the Braid: \\ Finding Outliers in a Set of Streams}

\author{
\begin{tabular}{ccc}
Chiranjeeb Buragohain & Luca Foschini & Subhash Suri\\
Amazon.com & Dept. of Computer Science & Dept. of Computer Science \\
Seattle, USA & University of California & University of California\\
\texttt{chiran@amazon.com} & Santa Barbara, USA & Santa Barbara, USA\\
 & \texttt{foschini@cs.ucsb.edu} & \texttt{suri@cs.ucsb.edu}
\end{tabular}
}

\maketitle

\begin{abstract}

Monitoring the performance of large shared computing systems such as
the cloud computing infrastructure raises many challenging algorithmic
problems. One common problem is to track users with the largest
deviation from the norm (outliers), for some measure of
performance. Taking a stream-computing perspective, we can think of
each user's performance profile as a stream of numbers (such as
response times), and the aggregate performance profile of the shared
infrastructure as a ``braid'' of these intermixed streams. The
monitoring system's goal then is to untangle this braid sufficiently
to track the top $k$ outliers. This paper investigates the space
complexity of one-pass algorithms for approximating outliers of this
kind, proves lower bounds using multi-party communication complexity,
and proposes small-memory heuristic algorithms. On one hand, stream
outliers are easily tracked for simple measures, such as max or min,
but our theoretical results rule out even good approximations for most
of the natural measures such as average, median, or the quantiles. On
the other hand, we show through simulation that our proposed
heuristics perform quite well for a variety of synthetic data.

\end{abstract}


\section{Introduction}

Imagine a general purpose stream monitoring system faced with the task
of detecting misbehaving \emph{streams} among a large number of
distinct data streams.  For instance, a network diagnostic program at
an IP router may wish to highlight \emph{flows} whose packets
experience unusually large average network latency.  Or, a cloud
computing service such as Yahoo Mail or Amazon's Simple Storage
Service (S3), catering to a large number of distinct users, may wish
to track the quality of service experienced by its users.  The
performance monitoring of large, shared infrastructures, such as cloud
computing, provides a compelling backdrop for our research, so let us
dwell on it briefly. An important characteristics of cloud computing
applications is the sheer scale and large number of users: Yahoo Mail
and Hotmail support more than \emph{250 million users}, with each user
having several GBs of storage. With this scale, any downtime or
performance degradation affects many users: even a guarantee of
$99.9\%$ availability (the published numbers for Google Apps,
including Gmail) leaves open the possibility of a large number of
users suffering downtime or performance degradation.  In other words,
even a $0.1\%$ user downtime affects 250,000 users, and translates to
significant loss of productivity among users.  Managing and monitoring
systems of this scale presents many algorithmic challenges, including
the one we focus on: \emph{in the multitude of users, track those
  receiving the worst service}.

Taking a stream-computing perspective, we can think of each user's
performance profile as a stream of numbers (such as response times),
and the aggregate performance profile of the whole infrastructure as a
\emph{braid} of these intermixed streams. The monitoring system's goal
then is to untangle this braid sufficiently to track the top $k$
outliers. In this paper, we study questions motivated by this general
setting, such as ``which stream has the highest average latency?'', or
``what is the median latency of the $k$ worst streams?,'' ``how many
streams have their $95$th percentile latency less than a given
value?''  and so on.

These problems seem to require peering into individual streams more
deeply than typically studied in most of the extant literature. In
particular, while problems such as \emph{heavy hitters} and
\emph{quantiles} also aim to understand the statistical properties of
IP traffic or latency distributions of webservers, they do so at an
\emph{aggregate} level: heavy hitters attempt to isolate flows that
have large total mass, or users whose total response time is
cumulatively large.  In our context, this may be uninteresting because
a user can accumulate large total response time because he sends a lot
of requests, even though each request is satisfied quickly. On the
other hand, streams that consistently show high latency are a
cause for alarm. More generally, we wish to isolate flows or users
whose service response is bad at a finer level, perhaps taking into
account the entire distribution.

\subsection{Problem Formulation}

We have a set $\sets$, which we call a \emph{braid}, of $m$ streams 
$\{ \s_1, \s_2, \ldots, \s_m \}$, where the $i$th stream has 
size $n_i$, namely, $n_i = |\s_i|$.  We assume that the number 
of streams is large and each stream contains potentially an 
unbounded number of items; that is, $m \gg 1$ and $n_i \gg 1$, 
for all $i$. By $v_{ij}$, we will mean the value of the $j$th
item in the stream $\s_i$; we make no assumptions about $v_{ij}$
beyond that they are real-valued. In the examples mentioned above,
$v_{ij}$ represents the \emph{latency} of the $j$th request by the
$i$th user.  We formalize the misbehavior quality of a stream by an
abstract \emph{weight} function $\l$, which is function of the set of
values in the stream. For instance, $\l (\s)$ may denote the average
or a particular quantile of the stream $\s$. Our goal is to design
streaming algorithms that can estimate certain fundamental statistics
of the set $\{ \l (\s_1), \l (\s_2), \ldots, \l (\s_m ) \}$.

When needed, we use a self-descriptive superscript to discuss specific
weight functions, such as $\l^\avg$ for average, $\l^\med$ for median,
$\l^{\max}$ for maximum, $\l^{\min}$ for minimum etc. For instance, if
we choose the weight to be the \emph{average}, then $\l^\avg (\s)$
denotes the average value in stream $\s$, and 
\begin{displaymath}
\max \{ \l^\avg (\s_1), \l^\avg (\s_2), \ldots, \l^\avg (\s_m ) \}
\end{displaymath}
computes the \emph{worst-stream by average latency}. Throughout we
will focus on the one-pass model of data streams.

As is commonly the case with data stream algorithms, we must content
ourselves with \emph{approximate} weight statistics because even in
the single stream setting neither quantiles nor frequent items can be
computed exactly. With this in mind, let us now precisely define what
we mean by a guaranteed-quality approximation of high weight
streams. There are two natural and commonly used ways to quantify an
approximation: by \emph{rank} or by \emph{value}. (Recall that the
\emph{rank} of an element $x$ in a set is the \emph{number} of items
with value equal to or less than $x$.)

\begin{itemize}
\item {\bf Rank Approximation:} Let $\l$ be an arbitrary weight
  function (e.g. median), and let $\l_i = \l (\s_i)$ be the value of
  this function for stream $\s_i$.  We say that a value $\l'_i$ is a
  \emph{rank approximation} of $\l_i$ with \emph{error $E$} if the
  rank of $\l'_i$ in the stream $\s_i$ is within $E$ of the rank of
  $\l_i$. Namely,
	\[ | \rank(\l'_i , \s_i ) \;-\; \rank( \l_i , \s_i ) | \:\leq\: E ,\]
	where $\rank(x,\s)$ denotes the rank of element $x$ in stream
        $\s$ and $E$ is a non-negative integer.  Thus, if we are
        estimating the median latency of a stream, then $\l'_i$ is its
        rank approximation with error $E$ if 
	$$|\rank(\l'_i,\s_i ) \;-\; \lfloor |S_i|/2 \rfloor | \:\leq\: E .$$

\item {\bf Value Approximation:} Let $\l$ be an arbitrary weight
  function (e.g. median), and let $\l_i = \l (\s_i)$ be the value of
  this function for stream $\s_i$.  We say that $\l'_i$ is a
  \emph{value approximation} of $\l_i$ with \emph{relative error $c$}
  if
        \[ | \l'_i  \;-\; \l_i | \:\leq\: c \l_i .\]
\end{itemize}

While rank approximation often seems more appropriate for
quantile-based weights, and value approximation for average, they both
yield useful insights into the underlying distribution. For instance,
given any positive $\alpha$, at most $|\s| / \alpha $ items in $\s$
can have value more than $\alpha \l^\avg (\s)$.  Thus, a rank
approximation of $\l^\avg (\s)$ also localizes the relative position
of the approximation.  Conversely, a value approximation of the median
or quantile can be especially useful when the distribution is highly
clustered, making the rank approximation rather volatile---two items
may differ greatly in rank, but still have values very close to each
other.
Our overall goal is to estimate streams with large weights with
guaranteed quality of approximation: in other words, if we assert that
the worst stream in the set $\sets$ has median weight $\l^*$ then we
wish to guarantee that $\l^*$ is an approximation of $\max \{ \l^\med
(\s_1), \l^\med (\s_2), \ldots, \l^\med (\s_m ) \}$, either by rank or
by value. We prove possibility and impossibility results on what error
bounds are achievable with small memory.

\subsection{Our Contributions}

We begin with a simple observation that finding the \emph{top k}
streams under the $\l^{max}$ or $\l^{min}$ weight functions is easy:
this can be done using $O(k)$ space and $O(\log k)$ per-item
processing time. In the context of webservices monitoring
applications, this allows us to track the $k$ streams with worst
latency values. As is well known, however, statistics based on max or
min values are highly volatile due to outlier effects, and filtering
based on more robust weight functions such as quantiles or even
average is preferred.

We propose a generic scheme that can estimate the weight of any stream
using $O(\eps^{-2} \log U \log \delta^{-1})$ space (being $U$ the size of the range of the values in the streams), 
with rank error $\eps \sum_{i=1}^m n_i$ with
probability at least $1- \delta$. With this, we can report the weights 
of the top $k$ streams for any of the natural functions such as 
average, median, or other quantiles so that the rank error in the 
reported values is at most $\eps \sum_{i=1}^m n_i$.

One may object to the $\sum_{i=1}^m n_i$ term in the rank
approximation, and ask for a more desirable $\eps n_i$ error term so
that the error depends only on the size of an individual stream,
rather than the whole set of streams.  On pragmatic terms also, this
is justified because even for modest values (a few thousand streams,
each with a million or so items), the $\sum_{i=1}^m n_i$ error can
make the approximation guarantee worthless. In essense, our error
approximation is \emph{linearly worsening} with the number of streams,
which is not a very scalable use of space.

Unfortunately, we prove an impossibility result showing that achieving
$\eps n_i$ error in rank approximation requires \emph{space at least linear
in the number of streams (the braid size)}. Worse yet, our lower bound 
even rules out $\eps \tilde{n}$ rank approximation, where $\tilde{n}$ is the
\emph{average} stream size in the set. Thus, the space complexity is not an
artifact of \emph{rarity}, as is the case with the \emph{frequent item}
problem. In particular, we show that even if all streams in the braid
have size $\Theta( \tilde{n})$, achieving rank approximation $\eps \tilde{n}$
requires space $\Omega (m (\frac{1-2\eps}{1+2\eps})^{2 + \gamma})$, 
for any $\gamma > 0$, where $m$ is the number of streams in the braid.

Similarly, for the value approximation we show that estimating the
average latency of the worst stream in $\sets$ within a factor $t$
requires $\Omega (m/t^{2 + \gamma})$ space. Our lower bounds also rule out 
optimistic bounds even for highly structured and special-case streams. 
For instance, consider a \emph{round robin} setting where values arrive 
in a round-robin order over the streams, so at any instant the size 
of any two streams differs by at most one. One may have hoped that 
for such highly structured streams, improved error estimates should be 
possible. Unfortunately, a variant of our main construction rules out that
possibility as well.

In the face of these lowers bounds, we designed and implemented two
algorithms, \expb and \varb, and evaluated them for
a variety of synthetic data distributions. We use three quality
metrics to evaluate the effectiveness of our schemes:
\emph{precision} and \emph{recall}, which measures how many of the top $k$
captured by our scheme are true top $k$, \emph{distortion}, which
measures the average rank error of the captured streams relative to
the true top $k$, and \emph{average value error}, which measures absolute
value differences.
We tested our scheme on a variety of synthetic data distributions.
These data use a normal distribution of values
within the streams, and either a uniform or a normal distribution across
streams. In all these cases, our precision approaches 100\% for all
three metrics (average, median, 95th percentile), the distortion is between
1 and 2, and the average error is less than $0.02$.
The memory usage plot also confirms the theory that the size of the
data structure remains unaffected by the number and the sizes of the streams.

\subsection{Related Work}

Estimation of stream statistics in the one-pass model has received a
great deal of interest within the database, networking, theory, and
algorithms communities. While the one-pass majority finding algorithm
dates back to Misra and Gries~\cite{misra82finding}, and tradeoffs
between memory and the number of passes required goes back to the work
of Munro and Paterson~\cite{munro80selection}, a systematic study of
the stream model seems to have begun with the influential paper of
Alon, Matias and Szegedy~\cite{alon96space}, who showed several
striking results, including space lower bounds for estimating
frequency moments, as well as for determining the frequency of the
most frequent item in a single stream.  While some statistics such as
the average, min, and the max can be computed exactly and
space-efficiently, other more holistic statistics such as quantiles
cannot. Fortunately, however, several methods have been proposed over
the last decade to \emph{approximate} these values with bounded error
guarantees.  For instance, quantiles can be estimated with additive
error $\eps n$ using space $O( \eps^{-1} \log
n)$~\cite{greenwald01space} or space $O(\eps^{-1} \log
U)$~\cite{shrivastava04medians} where $n$ is the stream size and $U$
is the largest integer value for the stream items.
There is also a rich body of literature on finding frequent items,
top $k$ items, and heavy hitters~\cite{cormode2008ffi,charikar2004ffi,karp2003saf,manku2002afc,
metwally2005ecf,schweller2007rse}.

Schemes such as Counting Bloom filters~\cite{fan98summary} or
Count-Min sketches~\cite{cormode05improved} can be viewed as methods
for estimating statistics over multiple streams. In particular, these
methods are motivated by the need to estimate the sizes of \emph{large
  flows} at a router: in our terminology, these methods estimate the
\emph{aggregate sizes} of the top $k$ streams in the braid. By
contrast, we are interested in more refined statistics (e.g. top $k$
by the average value) that require peering into the streams, rather
than simply aggregating them.  Bonomi et al.~\cite{bonomi06beyond}
have extended Bloom filters to maintain not just the presence or
absence of a stream, but also some state information about the stream.
But this state information does not reflect any aggregate statistical
properties of the stream itself.   

The time-series data mining community has focused on finding
similar~\cite{faloutsos94fast} or dissimilar
sequences~\cite{keogh05hot} in a set of large time-series sequences.
But this work is not geared towards a one-pass stream setting and
hence assumes that we have $O(m)$ memory available where $m$ is the
number of streams.

\subsection{Organization}

Our paper is organized in five sections.
In Section~\ref{sec:lower}, we present our main theoretical results,
namely, the lower bounds on the space complexity of single-pass
algorithms for detecting outlier streams in a braid.
In Section~\ref{sec:algos}, we propose two generic space-efficient schemes
for estimating the top $k$ streams in a braid, and analyze
their error guarantees. In Section~\ref{sec:exp}, we discuss our
experimental results. Finally, we conclude with a discussion
in Section~\ref{sec:conc}.

\section{Space Complexity Lower Bounds}
\label{sec:lower}

In this section, we present our main theoretical results, namely,
space complexity lower bounds that rule out space-efficient approximation
of outlier streams in a fairly broad setting. We mentioned earlier
that for simple weight functions such as the max or min, one can
easily track the top $k$ streams, using just $O(k)$ space and
$O(\log k)$ per-item processing. (This is easily done by maintaining
a heap of $k$ distinct streams with the largest item values.)
Surprisingly, this good news ends rather abruptly: we show that even 
tracking top $k$ streams using the \emph{second largest} item is 
already hard, and requires memory proportional to the size of the
braid, $|\sets|$.
Similarly, we argue that while tracking streams with the maximum or
the minimum items is easy, tracking streams with the largest
\emph{spread}, namely, difference of the maximum and the minimum
items, requires linear space.
Our main result rules out even good approximation of most of the
major statistical measures, such as average, median, quantiles, etc.

We begin by recalling our formal definition of approximating the
outlier streams.  Suppose we wish to rank the streams in the braid
$\sets$ using a weight function $\l$. Without loss of generality,
assume that the top $k$ streams are indexed $1, 2, \ldots, k$; that
is, $\l_1 \geq \l_2 \geq \cdots \geq \l_k$.  We say that a stream
$S_i$ is \emph{approximately a top $k$ stream} if its $\l$-value is at
least as large as $\l_k$ within the approximation error range. For
instance, suppose we are using the median latency $\l^\med$, then
stream $S_i$ is a top $k$ stream with rank approximation $E$ if the
value of item with rank $|S_i|/2 + E$ (true median plus the rank
error) in $S_i$ is at least as large as $\l_k$. Similarly, one can
define the approximation for value approximation.  In the following,
we discuss our lower bounds, which are all based on the multi-party
communication
complexity~\cite{bar-yossef04information,yao79some,nisan97communication}.
All our lower bounds employ variations on a single construction, so we
begin by describing this general argument below.

\subsection{The Lower Bound Framework}
Our lower bounds are based on reductions from the \emph{multi-party
  set-disjointness} problem, which is a well-known problem in
communication complexity~\cite{nisan97communication}.  An instance
DISJ$_{m,t}$ of the multi-party set disjointness problem consists of
$t$ players and a set of items $A = \{1,2,\ldots, m\}$. The player
$i$, for $i=1,2,\ldots, t$, holds a subset $A_i \subseteq A$. Each
instance comes with a promise: either all the subsets $A_1, A_2,
\ldots A_t$ are pairwise disjoint, or they \emph{all} share a single
common element but are otherwise disjoint.  The former is called the
YES instance (disjoint sets), and the latter is called the NO instance
(non-disjoint sets).  The goal of a communication protocol is to
decide whether a given instance is a YES instance or a NO
instance. The protocol only counts the total number of bits that are
exchanged among the players in order to decide this; the computation
is free.  We will use the following result from communication
complexity~\cite{bar-yossef04information}: any one-way protocol (where
player $i$ sends a message to player $i+1$, for $i=1,2, \ldots,t$,
that decides between all YES instances and NO instance with success
probability greater than $1-\delta$, for any $0< \delta < 1$, requires
at least $\Omega(m/t^{1+\gamma})$ bits of communication, where recall
that $t$ is the number of players and $m$ is the size of the set and
$\gamma > 0$ is an arbitrarily small constant.

The idea behind our lower bound argument is to simulate a one-way
multi-party set-disjointness protocol using a streaming algorithm for
the top $k$ streams.  If the streaming algorithm uses a synopsis data
structure of size $M$, then we show that there is a one-way protocol
using $O(Mt)$ bits that can solve the $t$-party disjointness
problem. Because the latter is known to have a lower bound of $\Omega
(m /t^{1+\gamma})$, it implies that the memory footprint of the streaming
algorithm becomes $\Omega(m / t^{2+\gamma})$. The basic construction associates
a stream with each element of the set $A$; namely, the stream $S_i$ is
identified with the item $i \in A$.  See Fig.~\ref{fig:lower-bound}
for an example.  We initialize each stream with some values, and then
insert the remaining values based on the sets $A_i$ held by the $t$
players; these values depend on specific constructions.  The key idea
behind all our constructions is that the braid of streams is such that
approximating the top $k$ streams within the approximation range
requires distinguishing between the YES and NO instances of the
underlying set-disjoint problem. We present the details below as we
discuss specific constructions. We begin with our main result on the space
complexity of tracking the top $k$ streams under the most common
statistical measures, such as median, quantile, or average.

\subsection{Space Complexity of Ranking by Median or Average}

\begin{figure}
  \includegraphics[width=0.45\textwidth]{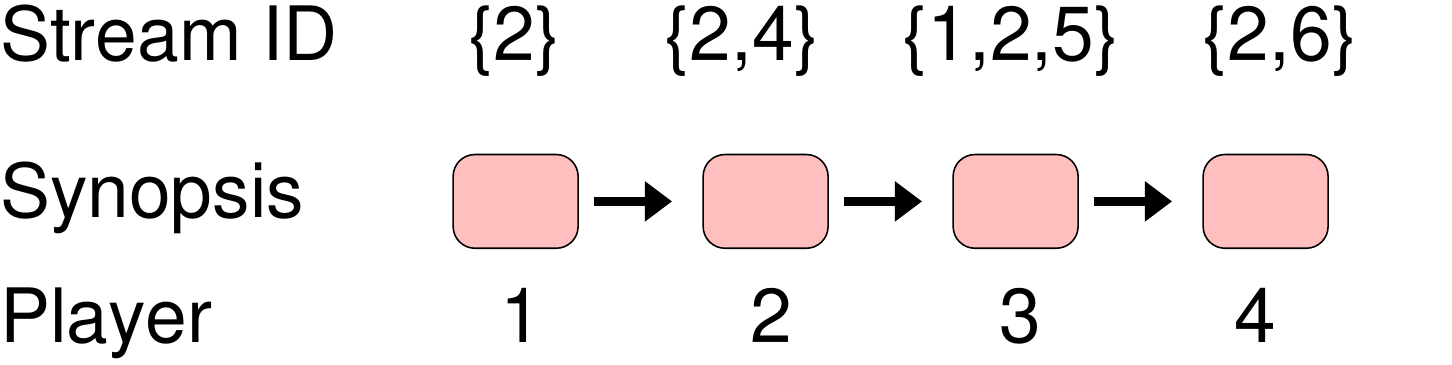}
  \caption{Illustration of a NO instance of DISJ$_{6,4}$, with four players.
	The subsets of the players are shown above their ids, and the values
    	inserted by them into the streams (following the protocol of
    Theorem~\ref{thm:lower-bound-median}) are shown in the Table~\ref{tbl:lower-bound}}.
  \label{fig:lower-bound}
\end{figure}

\begin{table}
  \begin{center}
  \begin{tabular}{|l|l|l|l|l|l|l|}
    \hline
    Stream ID & 1 &  2 &  3 &  4 & 5 & 6\\     \hline
    Player 1 & 0& 1&0&0&0&0\\     \hline
    Player 2 & 0& 1&0&1&0&0\\     \hline
    Player 3 & 1& 1&0&0&1&0\\     \hline
    Player 4 & 0& 1&0&0&0&1\\     \hline
  \end{tabular}
  \end{center}
  \caption{The values inserted by the players of Figure~\ref{fig:lower-bound}
	in the lower bound construction of Theorem~\ref{thm:lower-bound-median} are 
	shown in this table.}
  \label{tbl:lower-bound}
\end{table}

\begin{theorem}
Let $\sets$ be a braid of $m$ streams, where each stream has $\Theta (n)$
elements. Then, determining the top stream in $\sets$ by median value, 
within rank error $\eps n$ ($0 < \eps < 1/2$) requires space at least
$$\Omega\left(m\left(\frac{1-2\eps}{1+2\eps}\right)^{2+\gamma}\right).$$
for arbitrarily small $\gamma > 0$.  That is, finding the stream with
the maximum median latency, within additive rank approximation error
$\eps n$, requires essentially space linear in the number of streams.
\label{thm:lower-bound-median}
\end{theorem}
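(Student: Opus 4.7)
The plan is to reduce from the multi-party set disjointness problem $\mathrm{DISJ}_{m,t}$ exactly as the paper's framework dictates. Each element $j\in\{1,\ldots,m\}$ is identified with a stream $\s_j$, and the $t$ players simulate a run of the hypothetical $M$-bit streaming algorithm: player $i$ processes its batch of insertions and then forwards the synopsis to player $i+1$. The total communication is $O(Mt)$, so combining with the one-way lower bound $\Omega(m/t^{1+\gamma})$ for multi-party DISJ yields $M=\Omega(m/t^{2+\gamma})$; setting $t$ near the threshold $(1+2\eps)/(1-2\eps)$ will produce the theorem's bound.

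For the braid I would have each player $i$ insert $n/t$ items into every stream $\s_j$: copies of a high value (say $2$) if $j\in A_i$, and copies of a low value (say $0$) otherwise. A suitable initialization block of a middle value ($1$'s) of size $\Theta(\eps n)$ is loaded into each stream before any player's insertions begin. After all insertions, stream $\s_j$ contains $c_j\cdot(n/t)$ high values and $(t-c_j)\cdot(n/t)$ low values, where $c_j=|\{i:j\in A_i\}|$, together with the middle block. In a YES instance every $c_j\in\{0,1\}$ and the unique value whose rank lies inside the rank-error window $[n/2-\eps n,\,n/2+\eps n]$ is a low (or middle) value; in a NO instance the common element $j^*$ has $c_{j^*}=t$, so $\s_{j^*}$ is dominated by high values and the only value whose rank fits in the window is strictly higher than any value any YES stream can report.

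A direct computation of the achievable ranks in each of the three relevant cases ($c_j=0$, $c_j=1$, $c_j=t$) shows that the required separation is forced exactly when the fraction $(t-c_j)/t$ of low values in a YES-type stream sits at the upper end of the rank window while the corresponding quantity for $\s_{j^*}$ in the NO instance sits below the lower end. Solving the resulting inequalities gives the threshold $t>(1+2\eps)/(1-2\eps)$, and plugging this choice of $t$ into $\Omega(m/t^{2+\gamma})$ yields the bound $\Omega(m((1-2\eps)/(1+2\eps))^{2+\gamma})$.

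The main technical obstacle is choosing the sizes of the middle block and the per-player blocks so that three conditions hold simultaneously: (i) every stream admits at least one value whose rank lies inside the approximation window, so the algorithm's correctness hypothesis is meaningful on every stream; (ii) the YES and NO reported top-median values lie in non-overlapping ranges regardless of adversarial tie-breaking by the algorithm; and (iii) every stream has exactly $\Theta(n)$ items, as the theorem demands. Once these three are pinned down, the simulation-based reduction closes the argument.
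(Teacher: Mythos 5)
Your proposal takes essentially the same route as the paper's own proof: a reduction from multi-party set disjointness in which each player appends a block of high or low values to every stream according to membership, the $M$-bit synopsis is forwarded player-to-player for $O(Mt)$ total communication, and equating the resulting rank separation with $\varepsilon n$ fixes $t \approx \frac{1+2\varepsilon}{1-2\varepsilon}$, yielding $M=\Omega\bigl(m\bigl(\frac{1-2\varepsilon}{1+2\varepsilon}\bigr)^{2+\gamma}\bigr)$. The differences are cosmetic---the paper uses two values ($0$/$1$) with an initialization block of $p=n/(1+t)$ zeros per stream instead of your three-valued scheme---and your deferred calibration does go through, provided the middle block is sized $O((1-2\varepsilon)n)$ rather than $\Theta(\varepsilon n)$, since as $\varepsilon\to 1/2$ a $\Theta(\varepsilon n)$-size block would fall inside the NO-instance rank window and spoil the separation.
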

\begin{proof}
Suppose there exists a stream synopsis of size $M$ that can estimate
the latency of the maximum median latency stream within rank error
$\eps n$.  We now show a reduction that can use this synopsis to solve
the multi-party set disjointness problem using $O(M)$ bits of
communication.  Let $p$ be an integer, to be fixed later. We
initialize the synopsis by inserting $p$ items in each stream with
value 0.  The multi-party protocol then modifies the stream as
follows, one player at a time, from player 1 through $t$.  On his
turn, if player $j$ has item $i$ in its set, then it inserts $p$ items
of value 1 in stream $i$, for each $i \in A_j$.  If the player $j$
\emph{does not have} item $i$ in its set, then it inserts $p$ items of
value $0$ in the stream $i$.  (Recall that items of the ground set $A$
correspond to streams in our construction.)  Thus each player inserts
precisely $pm$ values to the streams, and in the end, each stream has
exactly $p+pt$ items in it.  An example of the running of this
protocol is shown in Fig.~\ref{fig:lower-bound} and the corresponding
values inserted by each player is tabulated in
Table~\ref{tbl:lower-bound}.

\begin{figure}
  \includegraphics[width=0.45\textwidth]{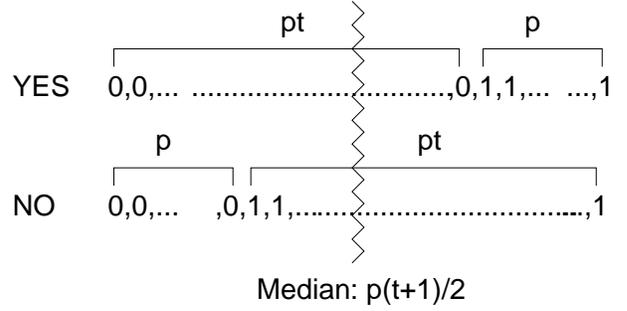}
  \caption{Distribution of values in streams for  YES and NO instances
    of  DISJ$_{m,t}$.  
For the YES instance, the median value is 0
    up to a rank error of $pt-p(t+1)/2$, while for the NO instance,
    the median value is 1 up to a rank error of $p(t+1)/2 -p$.}
  \label{fig:value-dist}
\end{figure}
After all the $t$ players are done, we output YES if the maximum
median latency among all streams is 0, and otherwise we return NO.  We
now reason why this helps decide the set-disjointness problem.
Suppose the instance on which we ran the protocol is a YES instance.
Then any stream has either all 0 values (this happens when index
corresponding to this stream is absent from all sets $A_j$), or it has
$p$ values equal to $1$ and $pt$ values equal to $0$ (because of the
disjointness promise, the index corresponding to this stream occurs in
precisely one set $A_j$ and that $j$ inserts $p$ copies of $1$ to this
stream, while others insert $0$s).  See Fig.~\ref{fig:value-dist}.
Therefore up to a rank error $p(t-1)/2$ the median latency of all the
streams is 0.  On the other hand, if this is a NO instance, then there
exists a stream that has $p$ 0 values and $pt$ values equal to $1$.
This stream, therefore, has a median latency of $1$ up to a rank error
$p(t-1)/2$.  Therefore our algorithm can distinguish between a YES and
a NO instance.

We may choose $p = n/(1+t)$, so that each stream has size $n$ and our
rank error is $n\frac{1}{2}\frac{t-1}{t+1}$.  Since the algorithm uses
$O(M)$ space and there are $t$ players, the total communication
complexity is $\Theta(Mt)$, which by the communication complexity
theorem is $\Omega(m/t^{1+\gamma})$.  Finally, solving the equality $\eps =
\frac{1}{2}\frac{t-1}{t+1}$ for $t$, we get the desired lower bound
that $M = \Omega\left(m\left(\frac{1-2\eps}{1+2\eps}\right)^{2+\gamma}\right)$.
This completes the proof.
\end{proof}

We point out that our stream construction is highly structured,
meaning that this lower bound rules out good approximation even for
very regular and balanced streams.  In particular, the difficulty of
estimating the maximum median latency is not a result of rarity of the
target stream: indeed, all streams have equal size.  Moreover, the
construction can be implemented in a way so that items are inserted
into the streams in a round-robin way (see
Table~\ref{tbl:lower-bound}).   Therefore, the construction
is also not dependent on a pathological spikes in stream
population. Thus, even under very strict ordering of values in the
streams, the problem of determining high weight streams remains hard.

It is easy to see that the construction is easily modified to 
prove similar lower bounds for other quantiles.
The same construction also shows a space lower bound for determining 
the maximum \emph{average latency} stream. Simply observe that the
average latency for the NO instance is $1/(1+t)$, while the average
latency for the YES instance is $t/(1+t)$.  Therefore any $t$-approximation 
algorithm for the average measure requires space at least $\Omega(m/t^{2 + \gamma})$,
for any $t \geq 2$ and $\gamma > 0$. 

\begin{theorem}
Determining the top stream by average value within relative error at most 
$t$ requires at least $\Omega(m/t^{2 + \gamma})$ space, where $m$ is 
the number of streams in the braid, $t \geq 2$ and $\gamma > 0$.
\label{thm:lower-bound-avg}
\end{theorem}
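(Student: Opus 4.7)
The plan is to reuse the construction from Theorem~\ref{thm:lower-bound-median} verbatim. We again initialize each of the $m$ streams with $p$ zero-valued items and, for player $j=1,\ldots,t$, insert $p$ ones into stream $S_i$ for each $i \in A_j$ and $p$ zeros into stream $S_i$ for each $i \notin A_j$. Every stream ends up with exactly $p+pt$ items, and the full state of the synopsis is passed one-way from player to player, using $O(M)$ bits per hop.

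First I would compute the maximum average over the streams in each case. Under the YES promise the sets $A_1,\ldots,A_t$ are pairwise disjoint, so every stream receives ones from at most one player; consequently every stream is either all zeros or contains exactly $p$ ones and $pt$ zeros, giving
\[
\max_i \lambda^{\avg}(S_i) \;=\; \frac{p}{p+pt} \;=\; \frac{1}{1+t}.
\]
Under the NO promise there is a single common element whose corresponding stream receives $p$ ones from each of the $t$ players, yielding $tp$ ones and $p$ zeros, so
\[
\max_i \lambda^{\avg}(S_i) \;=\; \frac{tp}{p+pt} \;=\; \frac{t}{1+t}.
\]
Thus the two candidate maxima differ by a multiplicative factor of exactly $t$.

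Next I would observe that any value-approximation scheme whose output for the NO instance is guaranteed to strictly exceed its output for the YES instance correctly decides DISJ$_{m,t}$. Matching this with the relative-error definition $|\lambda'_i - \lambda_i| \le c\,\lambda_i$ requires $c(1+t)<t-1$, i.e.\ $c<(t-1)/(t+1)$, so up to a change of variables one may simply take $t \ge 2$ and interpret the theorem's factor $t$ as the multiplicative gap between the two averages.

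Finally, the same accounting as in the median argument closes the reduction: the simulated protocol communicates $O(Mt)$ bits in total, and the one-way multi-party disjointness lower bound of Bar-Yossef et al.\ forces $Mt = \Omega(m/t^{1+\gamma})$, i.e.\ $M=\Omega(m/t^{2+\gamma})$. I expect no real obstacle in the reduction itself since the construction is inherited; the only delicate point is the cosmetic one of reconciling the statement's ``relative error $t$'' with the multiplicative gap of $t$ produced by the construction, which amounts to the reparametrization above.
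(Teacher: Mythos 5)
Your proposal is correct and is essentially the paper's own proof: the paper establishes this theorem by exactly the same reuse of the Theorem~\ref{thm:lower-bound-median} construction, observing that the maximum averages are $1/(1+t)$ versus $t/(1+t)$ and invoking the same $\Omega(m/t^{1+\gamma})$ one-way disjointness bound to get $M=\Omega(m/t^{2+\gamma})$. (Your assignment of $1/(1+t)$ to YES and $t/(1+t)$ to NO is in fact the correct one---the paper's one-sentence remark swaps the labels---and your explicit condition $c<(t-1)/(t+1)$ for the relative-error semantics is a slight tightening of the paper's looser ``$t$-approximation'' phrasing.)
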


\subsection{Lower Bound for Second Largest}

Surprisingly, similar constructions also show that even minor
variations of the easy case (finding the stream with the largest or
the smallest extremal value) make the problem provably hard.  In
particular, suppose we want to track the stream with the maximum
\emph{second largest} value. Let us denote this weight function as
$\l^{2max}$. Our proof below proves a space lower bound for even
approximating this. In particular, we say that an streaming algorithm
finds the second largest-valued stream with \emph{approximation factor
$c$} if it returns a stream whose rank by the second largest value
is at most $2c$, for any integer $c \geq 1$.  Note that this
definition of approximation is one sided, because the approximate
value returned always has rank $2c \geq 2$. (Of course,
allowing $c < 1$ trivializes the problem because then we can
always use the max value instead of the second largest.)
Then, we have the following.

\begin{theorem}
Determining the top stream by second-largest value within approximation 
factor $t$ requires at least $\Omega(m/t^{2+\gamma})$ space, for any $\gamma > 0$,
where $m$ is the number of streams in the braid and $t \geq 2$ is an integer.
\end{theorem}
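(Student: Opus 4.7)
The plan is to reduce from multi-party set-disjointness DISJ$_{m, T}$ with $T = 2t$ players, deliberately aligning the number of players with the approximation factor so that the shared stream can be made to survive the rank-$2t$ slack. Following the template of Theorem~\ref{thm:lower-bound-median}, I associate each ground-set element $i$ with stream $S_i$. On player $j$'s turn, player $j$ inserts into \emph{every} stream exactly one item: the value $V$ into $S_i$ if $i \in A_j$, and a much smaller value $v_0$ into $S_i$ otherwise. After all $2t$ players finish, stream $S_i$ contains exactly $k_i$ copies of $V$ and $2t - k_i$ copies of $v_0$, where $k_i = |\{j : i \in A_j\}|$.

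Next I would examine the rank-$2t$ item (counted from the top) of each stream, since this is precisely what a $t$-approximation to $\l^{2\max}$ is allowed to report. In a YES instance of DISJ, pairwise disjointness of the sets $A_j$ forces $k_i \leq 1$ for every $i$, so each stream consists of at most one $V$ together with at least $2t-1$ copies of $v_0$; hence both the true 2max and the rank-$2t$ item of every stream equal $v_0$. In a NO instance, the shared element $i^\ast$ has $k_{i^\ast} = T = 2t$, so $S_{i^\ast}$ is filled entirely with $V$'s, pushing its rank-$2t$ item and true 2max up to $V$; every other stream still has at most one $V$ and rank-$2t$ item equal to $v_0$. Consequently the reported approximate value equals $V$ only in the NO case, and comparing this value to $v_0$ lets the final player decide DISJ.

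Wrapping up, a streaming synopsis of size $M$ simulated across the $2t$ players yields a one-way protocol of total communication $O(Mt)$, so combining with the known $\Omega(m/(2t)^{1+\gamma})$ lower bound for DISJ$_{m,2t}$ forces $M = \Omega(m/t^{2+\gamma})$. The main technical point to get right --- and the reason the exponent deteriorates from $1+\gamma$ (as in raw set-disjointness) to $2+\gamma$ --- is the alignment $T = 2t$. With fewer players the unique shared stream would carry fewer than $2t$ copies of $V$, so its rank-$2t$ witness would collapse back to $v_0$ and the YES/NO separation would vanish; with more players the communication budget inflates without improving the gap. This also explains the $t \geq 2$ hypothesis: we need enough players for the shared stream to genuinely dominate under the rank-$2t$ relaxation.
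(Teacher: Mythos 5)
Your reduction is essentially the paper's own argument: the paper likewise reduces from one-way multi-party set-disjointness, has each player insert per-stream values so that in a NO instance the single shared stream accumulates $\Theta(t)$ large items (keeping even its rank-relaxed second maximum large) while in a YES instance every stream's second maximum stays small, and then combines the $\Omega(m/t^{1+\gamma})$ communication bound with the $O(Mt)$ cost of passing the synopsis to get $M = \Omega(m/t^{2+\gamma})$. Your variations---using $2t$ players each inserting a single two-valued item per stream, instead of the paper's $t$ players inserting the pairs $j+1$ and $1/(j+1)$---are cosmetic and, if anything, absorb the factor-of-two rank slack a bit more cleanly than the paper's own accounting.
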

\cut{
\begin{proof}
In this case, suppose there exists a streaming algorithm for
the top stream by second largest value problem using space $M$, and 
consider the following reduction from an instance of the $t$-party 
set-disjointness problem. Each player $j$, for $j=1,2,\ldots, t$,
in turn adds items to stream using the following rule:
player $j$ inserts values $j+1$ and $1/(j+1)$ in the stream $S_i$
for each $i$ in its set $A_j$. It inserts no values in any other
stream.

At the end, we return YES to the set-disjointness problem if our
streaming algorithm determines that $\lambda^{\2max} < 1$,
and NO otherwise.  We now reason its correctness.
If the $t$-party instance is a YES instance, then any stream $i$
either contains the value 0, or it contains values $p+1$ and $1/(p+1)$, 
for some $1\leq p \leq t$.
Thus, the top stream by the second largest value has value less than 1. 
On the other hand, if this is a NO instance, then there exists a stream 
that includes all the values $1/(t+1), 1/t, \ldots, t, t+1$, whose
second largest value is $t$. Therefore if our algorithm has an approximation 
ratio better than $t/2$, it can distinguish between YES and NO instances. 
Because our protocol requires sending $M$-size synopsis to $t$ players,
the total communication complexity is $\Theta(Mt)$, which by the 
lower bound on set-disjointness is at least $\Omega(m/t)$.
Therefore, determining the top stream by second largest must
require at least $\Omega(m/t^2)$ space, and this completes the proof.
\end{proof}

\comm{This is an alternate proof which to mind is better since it makes sure
  that all streams have equal number of items and hence the
  approximation factor of $t$ makes sense for both YES and NO instances.}
}

\begin{proof}
In this case, suppose there exists a streaming algorithm for the top
stream by second largest value problem using space $M$, and consider
the following reduction from an instance of the $t$-party
set-disjointness problem. Each player $j$, for $j=1,2,\ldots, t$, in
turn adds items to stream using the following rule: player $j$ inserts
values $j+1$ and $1/(j+1)$ in the stream $S_i$ for each $i$ in its set
$A_j$.  For all other streams that it does not hold, it inserts two 0
values in that stream.  Therefore every player inserts $2m$ values
into the braid and at the end of the protocol, each stream has exactly
$2t$ values.

At the end, we return YES to the set-disjointness problem if our
streaming algorithm computes the value of the top stream as less than
1, and NO otherwise.  We now reason its correctness.  If the $t$-party
instance is a YES instance, then any stream $i$ either contains the
values 0, or it contains values $0, p+1$ and $1/(p+1)$, for some
$1\leq p \leq t$.  Thus, the top stream by the second largest value
has value less than 1 up to an approximation factor of $t/2$.  On the
other hand, if this is a NO instance, then there exists a stream that
includes all the values $1/(t+1), 1/t, \ldots, t, t+1$, whose second
largest value is $t$.  Therefore if our algorithm has an approximation
ratio better than $t/2$, it can distinguish between YES and NO
instances.  Because our protocol requires sending $M$-size synopsis to
$t$ players, the total communication complexity is $\Theta(Mt)$, which
by the lower bound on set-disjointness is at least $\Omega(m/t^{1+\gamma})$.
Therefore, determining the top stream by second largest must require
at least $\Omega(m/t^{2+\gamma})$ space, and this completes the proof.
\end{proof}

\subsection{Lower Bound for Spread}


We next argue that while tracking the top stream with the largest
or the smallest value is possible, tracking the top stream with
the largest \emph{spread}, namely, $\max(S) -\min(S)$ is not 
possible without linear space.

\begin{theorem}
Determining the top stream by the spread requires at least
$\Omega(m)$ space, where $m$ is the number of streams in the braid.
\end{theorem}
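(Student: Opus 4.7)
The plan is to follow the same reduction template as the previous theorems, but this time base the argument on the \emph{two-party} set-disjointness problem, which has the stronger randomized communication lower bound $\Omega(m)$. Since only two players are involved, the simulation cost drops from $\Theta(Mt)$ to just $\Theta(M)$ bits, yielding $M = \Omega(m)$ directly without the $t^{2+\gamma}$ slack that appeared in the median/average bounds.

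The construction I would use is as follows. As before, associate stream $S_i$ with element $i \in \{1,\ldots,m\}$, and initialize each stream with a single $0$ value. Player~1, holding $A_1 \subseteq \{1,\ldots,m\}$, then inserts the value $+1$ into $S_i$ for each $i \in A_1$ and the value $0$ into $S_i$ for each $i \notin A_1$. After receiving Player~1's synopsis of size $M$, Player~2, holding $A_2$, inserts $-1$ into $S_i$ for each $i \in A_2$ and $0$ into $S_i$ for each $i \notin A_2$. Each stream now contains exactly three values, keeping the braid highly balanced.

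The key observation is the clean gap in the spread. In a YES instance, every stream $S_i$ contains values drawn only from $\{0, 1\}$ (if $i \in A_1 \setminus A_2$), only from $\{-1, 0\}$ (if $i \in A_2 \setminus A_1$), or only $\{0\}$, so the maximum spread over all streams is exactly $1$. In a NO instance, there is a unique $i^\ast \in A_1 \cap A_2$ whose stream contains both $+1$ and $-1$, so $S_{i^\ast}$ has spread $2$ while every other stream still has spread at most $1$. Hence Player~2 can decide disjointness by querying whether the top stream's spread is $1$ or $2$.

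Because Player~2 reconstructs the full braid's synopsis by feeding her own items into the $M$-bit state sent by Player~1, the whole protocol uses $O(M)$ bits of communication. The two-party set-disjointness lower bound~\cite{bar-yossef04information,nisan97communication} then forces $M = \Omega(m)$, as claimed. I do not foresee a real obstacle here; the only minor point to get right is ensuring the reduction works in the one-way streaming model (it does, since Player~1's items precede Player~2's in the stream order), and noting that the same separation argument extends to any approximation factor strictly less than $2$ should one wish to strengthen the statement.
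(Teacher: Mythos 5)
Your proof is correct and follows essentially the same route as the paper: a reduction from two-party set-disjointness in which each stream starts with a $0$, one player marks its set with $+1$ values and the other with $-1$ values, so the maximum spread is $2$ exactly in the NO (intersecting) case, giving $M=\Omega(m)$ from the $\Omega(m)$ two-party disjointness bound. The only cosmetic differences (padding non-set streams with extra $0$s to keep sizes equal, and the remark about approximation factors below $2$) do not change the argument.
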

\begin{proof}
Let us consider an instance of \textsc{DISJ}$_{m,t}$ and a streaming
algorithm with a synopsis of size $M$ which can determine the stream
with maximum spread.  

In this case, we can use a 2-party set-disjointness lower bound.
Let us call the two players, ODD and EVEN.
We begin by inserting a single value 0 in each of the $m$ streams.
First, the ODD player inserts the value $-1$ into each stream $S_i$
for which $i$ is in its set $A_{\textrm{ODD}}$.
Next, the EVEN player inserts the value $+1$ into each stream $S_i$
for which $i$ is in its set $A_{\textrm{EVEN}}$.
Clearly, the top stream by the maximum spread has spread 1, then
the sets of ODD and EVEN are disjoint, and so this is a YES instance.
Otherwise, the top stream has spread 2, and this is a NO instance.
The synopsis size of the streaming algorithm, therefore, is at
least $\Omega (m)$. This completes the proof.
\end{proof}

This finishes the discussion of our lower bounds. The main conclusion
is that approximating the top $k$ streams either by average value,
within any fixed relative error, or by any quantile, within
a rank approximation error of $\eps n_i$, is not possible, where
$n_i$ is the size of the top stream. In fact, the lower bound
even rules out the rank approximation within error of $\eps \tilde{n}$,
where $\tilde{n}$ is the \emph{average} size of the streams in the braid.

In the following section, we complement these lower bounds by describing 
a scheme with a worst-case rank approximation error 
$\sum_{i=1}^m \eps n_i$, using roughly $O(\eps^{-2} \log U)$ space.

\section{Algorithms for Braid Outliers}
\label{sec:algos}

We begin with a generic scheme for estimating top $k$ streams, and
then refine it to get the desired error bounds.
The basic idea is simple. Without loss of generality, suppose
the items (values) in the streams come from a range $[1, U]$. We subdivide 
this range into subranges, called \emph{buckets}, that are pairwise 
disjoint and cover the entire range $[1,U]$. All stream entries
with a value $v$ are mapped to the bucket that contains $v$. Within
the bucket, we use a sketch, such as the Count-Min sketch, to
keep track of the number of items belonging to different streams.
With this data structure, given any value $v$ and a stream index $i$,
we can estimate how many items of stream $S_i$ have values
in the range $[1,v]$. This is sufficient to estimate various
streams statistics such as quantiles and the average.
We point that this estimation incurs two kinds of error:
one, a sketch has an inherent error in estimating how many
of the items in a bucket belong to a certain stream $S_i$, and
two, how many of those are less than a value $v$ when $v$ is some
arbitrary value in the range covered by the bucket.
For the former, we simply rely on a good sketch for frequency
estimation, such as the Count-Min, but for the latter, we
explore two options, which control how the bucket boundaries are
chosen.

The first algorithm, called the \textsc{ExponentialBucket} algorithm, 
splits the range into pre-determined buckets, with boundary
$[\ell_b, r_b]$ such that the ratio $\ell_b/r_b$ is constant.  
This ensures that the relative \emph{value} error of our approximation
is bounded by $(r_b-\ell_b)/r_b$.  However, the pre-determined
buckets is unable to provide a non-trivial rank approximation error bound. 
Our second scheme, therefore, takes a more sophisticated approach to
bucketing, and \emph{adapts} the bucket boundaries to data, so as
to ensure that roughly an equal number of items fall in each bucket.

Before describing these algorithms in detail, let us first quickly
review the key properties of our frequency-estimation data structure,
Count-Min sketch, because we rely on its error analysis.
The Count-Min (CM) sketch~\cite{cormode05improved} is a randomized
synopsis structure that supports approximate count queries over data streams.  
Given a stream of $n$ items, a CM sketch estimates the frequency of 
any item up to an \emph{additive} error of $\eps n$, with (confidence)
probability at least $1-\delta$. The synopsis requires space
$O(\frac{1}{\eps}\log\frac{1}{\delta})$.  The per-item processing time
in the stream is $O(\frac{1}{\delta})$.  We shall use the Count-Min data structure 
as a building block of our algorithms, but any similar sketch with
the frequency estimation bound will do for our purpose.

\subsection{The Exponential Bucket Algorithm}

The {\expb} scheme divides the value range $[1,U]$ into roughly
$\log_{1 + \rho} U$ buckets. The first bucket has the range
$[1, 1+\rho)$, the second one has range $[1+\rho, (1 + \rho)^2 )$,
and so on.  There are a total of $\log_{1+ \rho} U$ buckets, with the
last one being $[U/(1+\rho), U]$. Note that the ranges are semi-closed, 
including the left endpoint but not the right. Only the last bucket is 
an exception, and includes both the endpoints. We will say that the $i$th
bucket has range $[(1+\rho)^i, (1+\rho)^{i+1}]$, with the first
bucket being labeled the $0$th bucket.

\begin{algorithm}
\SetLine
\ForEach{item  $v_{ij}$ in the braid}{
  bucketId $\leftarrow \frac{\log v_{ij}}{\log(1+\rho)}$ \;
  insert($i$) into CMSketch(bucketId)\;
}
\caption{\textsc{ExponentialBucket} Algorithm}
\end{algorithm}

A stream entry $v$, is associated with the unique bucket containing
the value $v$.  For every bucket, we maintain a CM sketch to count
items belonging to a stream id $i$. In particular, given an item
$v_{ij}$ ($j$-the value in stream $i$) in the braid, we first
determine the bucket
$\lfloor \log_{1+\rho} v_{ij} \rfloor$ containing this
item, and then in the CM sketch for that bucket, we insert the stream
id $i$.  Because there are $\log_{1+\rho} U$ buckets, and each
bucket's Count-Min sketch requires $O(\eps^{-1} \log \delta^{-1} )$
memory, the total space needed is $O(\eps^{-1} \log_{1+\rho} U \log
\delta^{-1} )$.

Let us now consider how to estimate the number of values belonging to
stream $i$ in a particular bucket $b$. The Count-Min sketch can
estimate the occurrences of stream $i$ in this bucket with additive error
at most $\eps n(b)$, where $n(b)$ is the number of values from all 
streams that fall into bucket $b$. Now suppose we want to approximate
the median value for a stream $i$. We first estimate $n_i = \sum_bn_i(b)$,
the total number of items in stream $i$ over all the buckets. The
error in estimating $n_i$ is given by the sum of individual errors in
each bucket

\begin{equation}
  {\rm  error}(n_i) = \sum_b \eps n(b) = \eps n.
\end{equation}
Then we find the bucket $B$ such that
\begin{equation}
  \sum_b^{B-1}n_i(b) \:\:\leq\:\: n_i/2 \:\:\leq\:\:  \sum_b^Bn_i(b).
\end{equation}

Then we report the left boundary of the bucket $B$ as our estimate for the
median value for stream $i$.  
We have the following theorem.

\begin{theorem}
The {\expb} is a data structure of size
$O(\eps^{-1} \log_{1+\rho} U \log \delta^{-1} )$ that, with probability at 
least $1-\delta$, can find the top $k$ streams in a set of $m$ streams by 
average, median, or any quantile value.
\end{theorem}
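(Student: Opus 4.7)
The plan is to show (a) the space bound follows directly from the construction, and (b) the data structure supports approximate recovery of each stream's weight, from which the top $k$ are extracted by sorting the estimated weights. Space is immediate: there are $\lceil \log_{1+\rho} U \rceil$ buckets, each holding a Count-Min sketch of size $O(\eps^{-1}\log\delta^{-1})$, so the total is $O(\eps^{-1}\log_{1+\rho} U\log\delta^{-1})$.

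For correctness, I would fix a stream $\s_i$ and a bucket $b$ and invoke the Count-Min guarantee: with probability at least $1-\delta$, the estimated count $\hat{n}_i(b)$ satisfies $|\hat{n}_i(b)-n_i(b)| \leq \eps\, n(b)$, where $n(b)$ is the total number of items from all streams landing in bucket $b$. Summing over buckets yields $|\hat{n}_i - n_i| \leq \eps\, n$ where $n = \sum_j n_j$, matching the rank-error promise $\eps\sum_j n_j$ advertised in the introduction. A union bound over $m$ streams and the $O(\log_{1+\rho} U)$ buckets controls the overall failure probability at the cost of scaling $\delta$ to $\delta/(m\log_{1+\rho} U)$, a logarithmic factor that can be absorbed in the stated bound.

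To answer a quantile query on $\s_i$, I follow the algorithm: locate the bucket $B$ at which the estimated prefix sum of $\hat{n}_i(b)$ first crosses the target rank (e.g.\ $\hat{n}_i/2$ for the median) and return its left endpoint. The true rank of this endpoint in $\s_i$ equals $\sum_{b<B} n_i(b)$, which differs from the target $n_i/2$ by at most the sum of per-bucket estimation errors (bounded by $\eps\, n$) plus the gap between $\hat{n}_i/2$ and $n_i/2$ (another $O(\eps\, n)$). Hence the reported value has rank error $O(\eps\, n)$. For the average, I estimate $\hat{\l}^{\avg}_i$ by $\hat{n}_i^{-1}\sum_b (1+\rho)^b\, \hat{n}_i(b)$; the bucket width contributes a multiplicative $(1+\rho)$ factor to the value error, while the Count-Min contribution is small whenever $\eps\, n \ll n_i$.

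Given per-stream weight estimates with controlled error, sorting the $m$ estimated weights and returning those with the $k$ largest values identifies streams whose true weight is within the stated error window of $\l_k$, fulfilling the earlier notion of ``approximately top $k$.'' \textbf{The main obstacle} I expect is the union-bound bookkeeping together with keeping the rank-error chain tight, because a single $\eps\, n$ slack must absorb both the total-count estimation error and the prefix-count estimation error while still producing an $O(\eps\, n)$ final rank error. A secondary subtlety is the average case, where the bucket resolution $(1+\rho)$ enters the value error but must not degrade the claimed dependence on $\rho$ in the space bound.
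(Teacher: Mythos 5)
Your space accounting and the query procedure (per-bucket Count-Min sketches, summing per-bucket estimates to get $\hat n_i$ with total error $\eps n$, locating the crossing bucket and reporting its left endpoint) match what the paper does. But your correctness argument claims more than the theorem does, and the extra claim is false: you assert the reported quantile has rank error $O(\eps n)$, absorbing the bucketization error into the same $\eps n$ slack as the sketch error. That step fails. The rank of the returned left endpoint of bucket $B$ is $\sum_{b<B} n_i(b)$, and while the upper deviation from $n_i/2$ is indeed $O(\eps n)$, the lower deviation carries an additional term $n_i(B)$, the number of stream-$i$ items inside the crossing bucket itself. Exponential bucketing bounds only the \emph{relative value width} of a bucket, never its population, so $n_i(B)$ can be as large as $n_i$ (all of stream $i$ can land in one bucket), giving rank error $\Theta(n_i)$, not $O(\eps n)$. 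The paper is explicit about exactly this immediately after the theorem: \expb ``cannot guarantee any significant rank or value approximation error,'' and the worst case is the trivial rank error $n$; that deficiency is precisely why \varb is introduced, since the q-digest invariant caps each bucket's count at $\lfloor n\rho/\log U\rfloor$ and only then does the $\eps n$ rank bound go through. The theorem you were asked to prove deliberately states only the size bound and the ability to return top-$k$ answers with probability $1-\delta$, with no quantified error.

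Your average-value argument has the same character of overreach: the statement that the Count-Min contribution is ``small whenever $\eps n \ll n_i$'' is a conditional observation, not a guarantee, and misattributed mass can sit in buckets with values near $U$, so the value error of the estimated average can be on the order of $\eps n U / n_i$ (the paper notes value error can reach $\Theta(U)$ when mass concentrates in extreme buckets). The union-bound bookkeeping over $m$ streams and $O(\log_{1+\rho}U)$ buckets that you flag is fine and only costs logarithmic factors inside the $\log\delta^{-1}$ term. So: the parts of your write-up that correspond to the actual theorem (size, construction, query mechanism, success probability) are correct and essentially the paper's account, but the $O(\eps n)$ rank-error and average-error guarantees you build on top are not provable for \expb and contradict the paper's own discussion; they belong to the \varb theorem, which needs the adaptive bucket invariant you do not have here.
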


The {\expb} scheme is simple, space-efficient, and easy to implement,
but unfortunately one cannot guarantee any significant rank or value approximation 
error with this scheme. For instance, in the worst-case, all items could fall
in a single bucket, giving us only the trivial rank error of $n$.
Similarly, it could also happen that all elements tend to fall into
the two extreme buckets, and the $\eps n$ error in size estimation may
cause us to be incorrect in our value approximation by $\Theta (|U|)$.
Thus, we will use \expb only as a heuristic whose main virtue is simplicity,
and whose practical performance may be much better than its worst-case.
In the following, we present a more sophisticated scheme that 
adapts its bucket boundaries in a data-dependent way to yield
a rank approximation error bound of $\eps n$.

\subsection{The Variable Bucket Algorithm}

The basic building block of {\varb} is the $q$-digest data structure~\cite{shrivastava04medians},
which is a deterministic synopsis for estimating the quantile of a data 
stream. At a high level, given a stream of $n$ values in the range $[1,U]$, 
the q-digest partitions this range into $O(\rho^{-1} \log U)$ buckets such 
that each bucket contains $O(\rho n)$ values.  This synopsis allows us to 
estimate the $\phi$-th quantile of the value distribution in the stream up 
to an additive error of $\rho n$ using space $O(\rho^{-1} \log U)$. 
We briefly describe the q-digest data structure below with its important 
properties, and then discuss how to construct our {\varb} structure
on top of it.  Throughout we shall assume that $U$ is a power of 2 
for simplicity.

\subsubsection{Approximate Quantiles Through  q-digest}

The q-digest divides the range $[1,U]$ into $2U-1$ tree-structured buckets.
Each of the lowest level (zeroth level) bucket spans just a single value,
namely, $[1,1],[2,2],\ldots,[U,U]$.  The next level bucket ranges are
$[1,2],[3,4],\ldots,[U-1,U]$, the one after that 
$[1,4],[5,8],\ldots,[U-3,U]$ and so on until the highest level bucket
span the entire range $[1,U]$.  In general the buckets at level $\ell$ are 
of the form 
$$[2^\ell(2^i-1)+1,2^\ell2^i)],$$ 
where $\ell=0,1,2,\ldots$ and $i=0,1,2\ldots$.  These buckets can be 
naturally organized in a binary tree of depth $\log U$ as shown in Fig.~\ref{fig:q-digest}.  For example, the 
bucket $[1,4]$ has two children: $[1,2]$ and $[3,4]$, while $[1,4]$ itself is 
the left child of $[1,8]$.  Every bucket contains in integer counter which 
counts the number of values counted within that bucket.  Note that the 
buckets in a q-digest are not disjoint: a single bucket overlaps in range with
all its children and descendants.  A q-digest with error parameter
$\rho$ consists of a small subset (size $O(\rho^{-1} \log U )$) of all
possible $2U-1$ buckets.

\begin{figure}
  \begin{center}
    \includegraphics[width=0.45\textwidth]{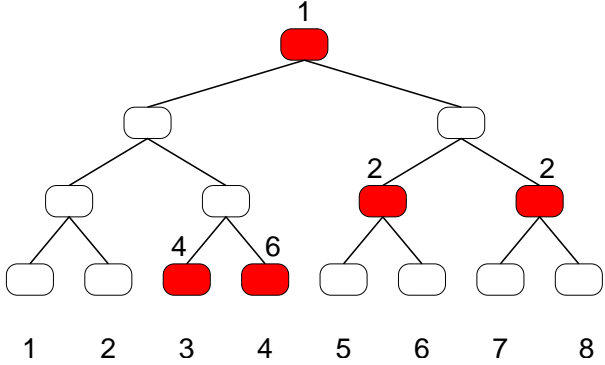}
  \end{center}
  \caption{A q-digest formed over the value range [1,8].  The complete
    tree of buckets is shown and the buckets that actually exist on
    the q-digest are highlighted in red.  The numbers next to the
    filled buckets show how many values were counted within that
    bucket.  For this q-digest, $n = 15, U = 8$ and the threshold $\lfloor
    n\rho/\log U\rfloor = 3$ (see
    eqns.~\ref{eqn:error},~\ref{eqn:memory}).}
\label{fig:q-digest}
\end{figure}

Intuitively a q-digest has many similarities to a equi-depth histogram: 
the buckets correspond to the histogram buckets and we
strive to maintain the q-digest such that all buckets have roughly
equal counts.  The memory footprint of the q-digest is proportional to
the number of buckets.  Therefore to reduce memory consumption, we can
take two sibling buckets and \emph{merge} them with the parent bucket.
The merge is done by deleting both the children and then adding their
counts to the parent bucket.  The merge step loses information, since
the counts of both the children are lost, but reduces memory
consumption.

Formally speaking, a q-digest with error parameter $\rho$, is a subset
of all possible buckets such that it satisfies the following q-digest
invariant.  Suppose that the total number of values counted within the
q-digest is $n$.  Then any bucket $b$ in the q-digest satisfies the
following two properties:
\begin{eqnarray}
  \rm{count}(b) & \leq & \left\lfloor \frac{n\rho}{\log U}
  \right\rfloor \label{eqn:error}\\ 
\rm{count}(b) + \rm{count}(b_p) + \rm{count}(b_s) & > & \left\lfloor
\frac{n\rho}{\log U} \right\rfloor 
\label{eqn:memory}
\end{eqnarray}
where $b_p$ is the parent bucket of $b$ and $b_s$ is the sibling
bucket of $b$.  The first property (\ref{eqn:error}) ensures that none
of the buckets are too heavy and hence attempts to preserve accuracy
being lost by merging too many buckets.  The only exception to this
property is the leaf bucket, which due to integer value assumption cannot
be divided any further.  The second property (\ref{eqn:memory})
ensures that the total values counted in a bucket, its sibling and
parent are not too few; therefore it encourages merging of buckets to
reduce memory.  The only exception to this property is the root bucket
since it does not have a parent.

The q-digest supports two basic operations: \insert and \compress.
Below, we show how we extend this q-digest structure to implement our
\varb algorithm and how the basic operations work.

\subsubsection{Variable Buckets Using q-digest}
\begin{algorithm}
\SetLine
\ForEach{item $ v_{ij}$ in the braid}{
  \If{bucket $b = [v_{ij},v_{ij}]$ does not exist} {
    create bucket $b$ \;
  }
  insert $i$ into CMSketch($b$)\;
  increment count($b$) \;
  \If{$b$ and its parent and sibling violates q-digest property}{
    \compress
  }
}
\caption{\varb \insert Algorithm}
\end{algorithm}

\begin{algorithm}
\SetLine
\For{$\ell=0$ {\KwTo} $(\log U-1)$} {
  \ForEach{ bucket $b$ in level $\ell$} {
    \If{$\textrm{count}(b) + \textrm{count}(b_p) + \textrm{count} (b_s) 
      < \lfloor n  \rho/\log U \rfloor$ } {
      count($b_p$) $\leftarrow$ count($b$) + count ($b_s$) \;
      CMSketch($b_p$) $\leftarrow$  
      CMSketch($b_p$) $\cup$  CMSketch($b$) $\cup$ CMSketch($b_s$) \;
      delete $b$ and $b_s$ \;
    } 
  } 
}
\caption{\varb \compress Algorithm}
\end{algorithm}

The \varb algorithm can be understood as a derivative of the q-digest
data structure.  In the basic q-digest, we divide the input values
into $\rho^{-1} \log U$ buckets and in each bucket we count the values in
that bucket using a simple counter.  In the \varb synopsis, we
augment this simple counter by an CM sketch of size
$O(\eps^{-1}\log \delta^{-1})$.   

Initially the q-digest starts out empty with no buckets.  When
processing the next value $v_{ij}$ (the $j$-th value in stream $i$) in
the stream, we first check the q-digest to see if the leaf bucket
$[v_{ij},v_{ij}]$ exists.  If it exists, then we insert the stream id
$i$ into the CM sketch for that bucket and increment the counter for
that bucket by 1.  If that bucket does not exist, then we create a new
$[v_{ij},v_{ij}]$ bucket and insert the pair into that bucket.  After
adding the pair, we carry out a \compress operation on the q-digest.
As more and more data is inserted into the \varb structure, the
buckets are automatically merged and reorganized by the \compress
operation.  This adaptive bucket structure is the reason for the name
\varb.

Given this data structure, let us now see how to approximate the median 
value of a stream $i$.  First we estimate $n_i$ by taking the union of all
the CM sketches in each bucket and then querying it for the value
of $n_i$.  We then do a post-order traversal of the q-digest tree and 
merge the CM sketches of all the buckets visited.  As we merge the CM
sketches, we check the value of the count for stream $i$.  Suppose 
when we merge bucket $b$ to the unified CM sketch, the number of values in 
the unified CM sketch exceeds $n_i/2$.  Then we report the right edge of 
the bucket $b$ as our estimate for the median.   

This estimate has three sources of error. The first error is in the
estimate of $n_i$ itself, which is at most $\eps n$ using the CM sketch
error bound. The second source of error is from estimating the value
of $n_i/2$ while taking union of multiple CM sketches and this error
is again $\eps n$.  The third error is from the error in count of $i$
in bucket $b$.  This error arises from the fact that any value which
is counted in bucket $b$ can be counted on its ancestors as well,
because the bucket $b$ overlaps with all its ancestors.  Since there
are at most $\log U$ ancestors and every ancestor has count at most
$n\rho/\log U$ (eqn.~\ref{eqn:error}), the
total error is $n\rho$ from this source.  Therefore the total error is
$2\eps n + \rho n$.  By rescaling 
$\eps$ by a factor of two, and setting $\rho = \eps$, we arrive at 
the following theorem.

\begin{theorem}
The {\varb} is a data structure of size
$O(\eps^{-2} \log U \log \delta^{-1} )$ that,
with probability at least $1-\delta$, can find the top $k$ streams in a set 
of $m$ streams by average, median, or any quantile value, with (additive) 
rank approximation error $\eps n$, where $n$ is the total size of all the
streams in the braid. 
\end{theorem}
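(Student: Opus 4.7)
The plan is to prove both the space bound and the rank error bound by combining the analysis of the q-digest with the CM sketch error analysis, essentially formalizing the discussion that precedes the theorem statement.

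First I would bound the space. The \varb structure is a q-digest with error parameter $\rho$, so by the q-digest invariants (\ref{eqn:error}) and (\ref{eqn:memory}) it contains at most $O(\rho^{-1}\log U)$ buckets. Each bucket carries a Count-Min sketch of size $O(\eps^{-1}\log\delta^{-1})$ instead of a plain counter, so the total space is $O(\rho^{-1}\eps^{-1}\log U \log \delta^{-1})$. Setting $\rho = \eps$ (and absorbing constant factors into $\eps$) yields the stated $O(\eps^{-2}\log U \log\delta^{-1})$ bound.

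Next I would bound the rank error for a quantile query on stream $i$. Fix a target quantile $\phi$ and let $n = \sum_j n_j$. The query proceeds in two phases: (a) estimate $n_i$ by unioning the per-bucket CM sketches and reading off the count of stream id $i$; (b) walk the buckets in post-order, merging their CM sketches into a running union and stopping at the first bucket $b$ whose running count for $i$ exceeds $\phi \hat{n}_i$. I would then identify three error sources, in the same spirit as the preceding paragraph. The CM guarantee gives $|\hat{n}_i - n_i|\leq \eps n$ with probability $\geq 1-\delta$, contributing rank error $\eps n$ to the chosen threshold. At the stopping bucket $b$, the CM union gives an additive error of $\eps n$ on the prefix count, contributing another $\eps n$. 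Finally, the q-digest structural error: because a bucket $b$ overlaps all its ancestors, items belonging to ancestors of $b$ can be charged to the prefix; by invariant (\ref{eqn:error}) each ancestor holds at most $\lfloor n\rho/\log U\rfloor$ items and there are at most $\log U$ ancestors, so this contributes $\rho n$. Summing gives total rank error $(2\eps + \rho)n$; choosing $\rho = \eps$ and rescaling $\eps$ by a constant yields the claimed $\eps n$ bound.

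To handle the top $k$ streams, I would apply the per-stream estimator to each stream id and return the $k$ streams whose estimated $\l$-values are largest. Since the rank error on each stream's quantile value is $\eps n$, any stream selected as top $k$ has true rank (under its own weight function $\l$) within $\eps n$ of the true top-$k$ threshold, matching our approximation definition. For average, the same construction works by integrating the value distribution induced by the buckets against their midpoints (or boundaries), with the same $\eps n$ additive rank error propagating through.

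The only delicate step is confidence amplification: the CM sketch guarantee holds with probability $1-\delta$ per query, and we make $O(\log U)$ such queries during the post-order traversal as well as one for the global $n_i$. I would apply a union bound after rescaling $\delta$ by a $\Theta(\log U)$ factor, which is absorbed into the $\log\delta^{-1}$ factor of the space bound. This is the main technical subtlety; everything else is a direct composition of the q-digest correctness proof of~\cite{shrivastava04medians} with the CM sketch bound of~\cite{cormode05improved}.
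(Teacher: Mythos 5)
Your proposal is correct and follows essentially the same argument as the paper: $O(\rho^{-1}\log U)$ q-digest buckets each carrying an $O(\eps^{-1}\log\delta^{-1})$-size CM sketch, plus the same three-way error decomposition (estimate of $n_i$, prefix-count error from the sketch union, and $\rho n$ from the $\log U$ ancestor buckets), with $\rho=\eps$ and a rescaling of $\eps$. Your added remark on union-bounding the failure probability over the traversal is a minor tightening the paper leaves implicit, not a different route.
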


Similarly, we can find top $k$ streams using other measures such as 95th 
percentile, average, etc. 

\section{Experimental Results}
\label{sec:exp}

In this section, we discuss our empirical results. We implemented both of 
our schemes \expb and \varb and evaluated them on a variety of datasets. In all of our experiments we found that 
\varb consistently outperformed \expb and the main advantage of \expb 
is its somewhat smaller memory usage. (However, the memory advantage is at 
most a factor of 2.) Therefore, we report all performance numbers 
for the \varb only and later show one experiment comparing the relative
performance of the two schemes.

We focused on three most common statistical 
measures of streams: the median, the 95th percentile, and the average value. 
Our goal in these experiments was to evaluate the effectiveness of these 
schemes in extracting the top $k$ streams using these measures. We used 
the following performance metrics for this evaluation.

\begin{enumerate}

\item \textbf{Precision and Recall}: 
	Precision is the most basic measure for quantifying the
	success rate of an information retrieval system.
	If $\mathcal{S}(k)$ is the true set of top $k$ streams under 
	a weight function and $\mathcal{S}^\prime(k)$ is the set of 
	streams returned by our algorithms, then the precision at $k$ $P(k)$ 
	of our scheme is defined as

  \begin{equation}
    P(k) \:=\: \frac{|\mathcal{S}(k) \cap \mathcal{S}^\prime(k)|}{|\mathcal{S}(k)|}
	= \frac{|\mathcal{S}(k) \cap \mathcal{S}^\prime(k)|}{k}.
  \end{equation}

	Thus, precision provides the relative measure of how many of
	the top $k$ are found by our scheme. The precision values always lie 
	between $0$ and $1$, and closer the precision to $1.0$, the better the algorithm.
        We note that in this particular case, precision is the same as recall, defined as 

  \begin{equation}
    R(k) \:=\: \frac{|\mathcal{S}(k) \cap \mathcal{S}^\prime(k)|}{|\mathcal{S}^\prime(k)|}
	= \frac{|\mathcal{S}(k) \cap \mathcal{S}^\prime(k)|}{k}.
  \end{equation}
  
  since $|\mathcal{S}(k)|=|\mathcal{S}^\prime(k)|$. 

\item \textbf{Distortion}: 
	The precision is a good measure of the fraction of top $k$ streams 
	found by our algorithm, but it fails to capture the ranking of those
	streams. For example, suppose we have two algorithms, and both correctly
	return the top 10 streams but one returns them in the correct rank
	order while the other returns them in the reverse order.
	Both algorithms enjoy a precision of 1.0 but clearly the second
	one performs poorly in its ranking of the streams.
	Our \emph{distortion} measure is meant to capture this ranking quality.
	Suppose for a stream $S_i$ the true rank is $r(S_i)$, while our
	heuristic ranks it as $r' (S_i)$. Then, we define the (rank)
	distortion for stream $S_i$ to be

	\[ d_i \:=\:  \left\{  \begin{array}{l}
				r(S_i)/r'(S_i), \quad \mbox{if}~ r(S_i) \geq r'(S_i) \\
				r'(S_i)/r(S_i), \quad \mbox{otherwise} .
				\end{array} \right.
	\]

	The overall distortion is taken to be the average distortion for the $k$ 
	streams identified by our scheme as top $k$.
	The ideal distortion is 1, while the worst distortion can be
	$\Theta (m/k)$, where $m$ is the size of the braid: this happens when the 
	algorithm ranks the bottom $k$ streams as top $k$, for $k \ll m$.
	Thus, smaller the distortion, the better the algorithm.

\item \textbf{Value Error}: 
	Both precision and distortion are purely rank-based measures,
	and ignore the actual values of the weight function  $\lambda (S)$.
	In the cases when data is clustered, many streams can have roughly
	the same $\lambda$ value, yet be far apart in their absolute ranks.
	Since in many monitoring application, we care about streams
	with large weights, a user may be perfectly satisfied with any stream
	whose weight is close to the weights of true top $k$ streams.
	With this motivation, we define a value-based error metric, as follows.
  	Suppose the true and approximate streams at rank $k$ are $S_k$ and 
	$S_k^\prime$.  Then the \emph{relative value error} $e(k)$ is defined 
	as
  \begin{equation}
    	e(k) = \left| \frac{\lambda(S_k)-\lambda(S_k^\prime)}{\lambda(S_k)}\right|
\label{eqn:value-err}
  \end{equation}

	The average value error $e$ for the top $k$ is then defined as the
	average of $e(k)$ over all $k$ streams.

\item \textbf{Memory Consumption}: The space bounds that our theorems
  	give are unduly pessimistic.  Therefore we also empirically
	evaluated the memory usage of our scheme.
\end{enumerate}

We generate several synthetic data sets using natural distributions
to evaluate the performance and quality of our algorithms.
In all cases, we use 1000 streams, with about 5000 items each,
for the total size of all the streams 5M.
In all cases, the values within each stream are distributed using a
Normal distribution with variance $U/20$. The mean values for each distribution are
picked by an inter-stream distribution, for which we try 3 different
distributions: uniform, outlier, and normal.

\begin{itemize}
\item In the \textbf{Uniform distribution}, we pick values uniformly at random
  from the range $U = [1, 2^{16}]$, and each such value acts as the
  mean $\mu_i$ for stream $S_i$. 
  
\item In the \textbf{Outlier distributions}, we choose 900 of the streams with values
  in the range $[0, 0.6U]$ and the remaining 100 streams in the range
  $[aU, (a+0.2)U]$, with $a<1$, for different values of $a$. 

\item In the \textbf{Normal distribution}, the values are chosen from a normal
  distribution with mean $2^{15}$, and standard deviation $2^{14}$.
\end{itemize}

\subsection{Precision}

\begin{figure}
\subfigure[Precision for $\lambda$=average]{
  \includegraphics[width=0.45\textwidth]{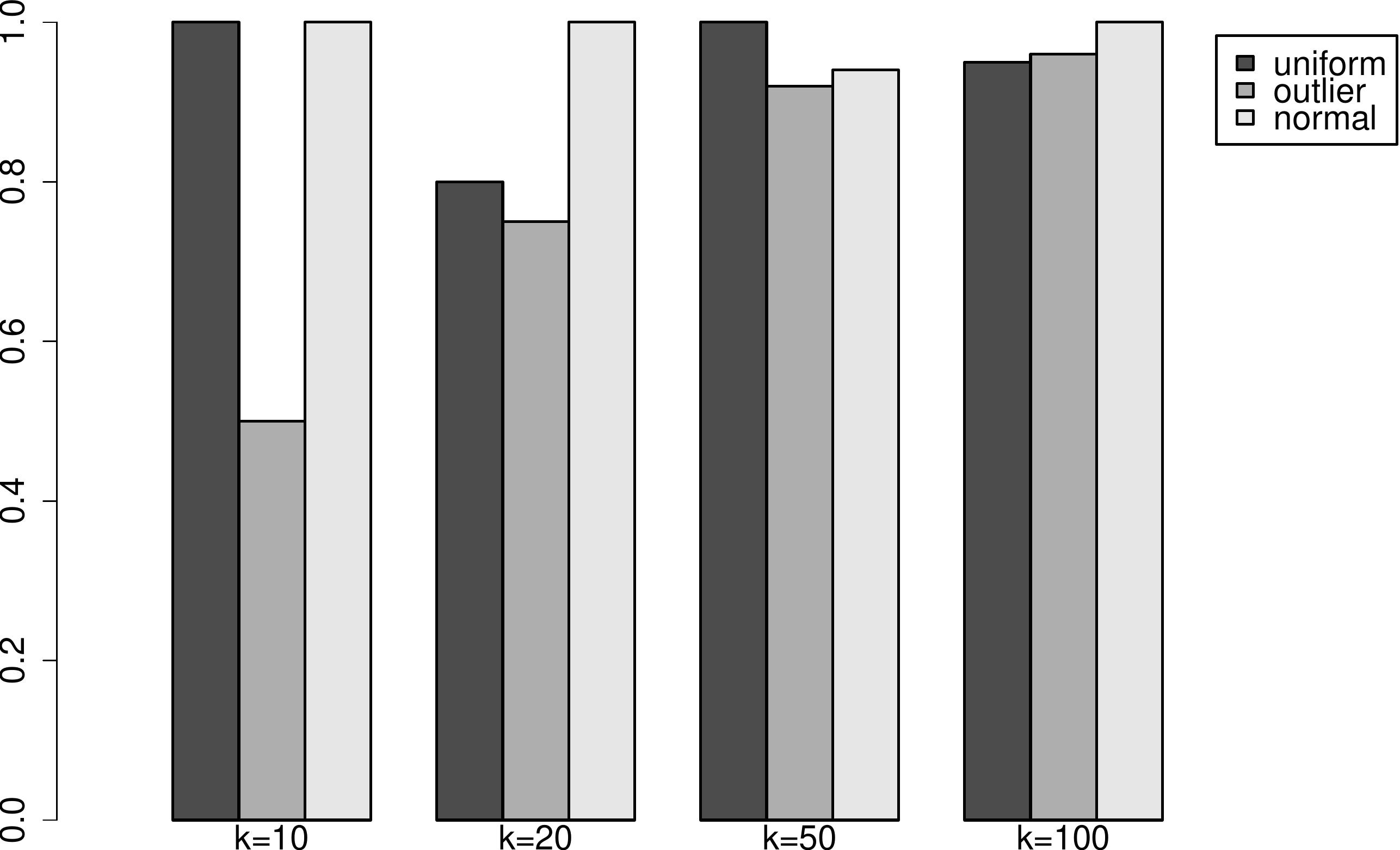}

\bigskip
}

\subfigure[Precision for $\lambda$=median]{
  \includegraphics[width=0.45\textwidth]{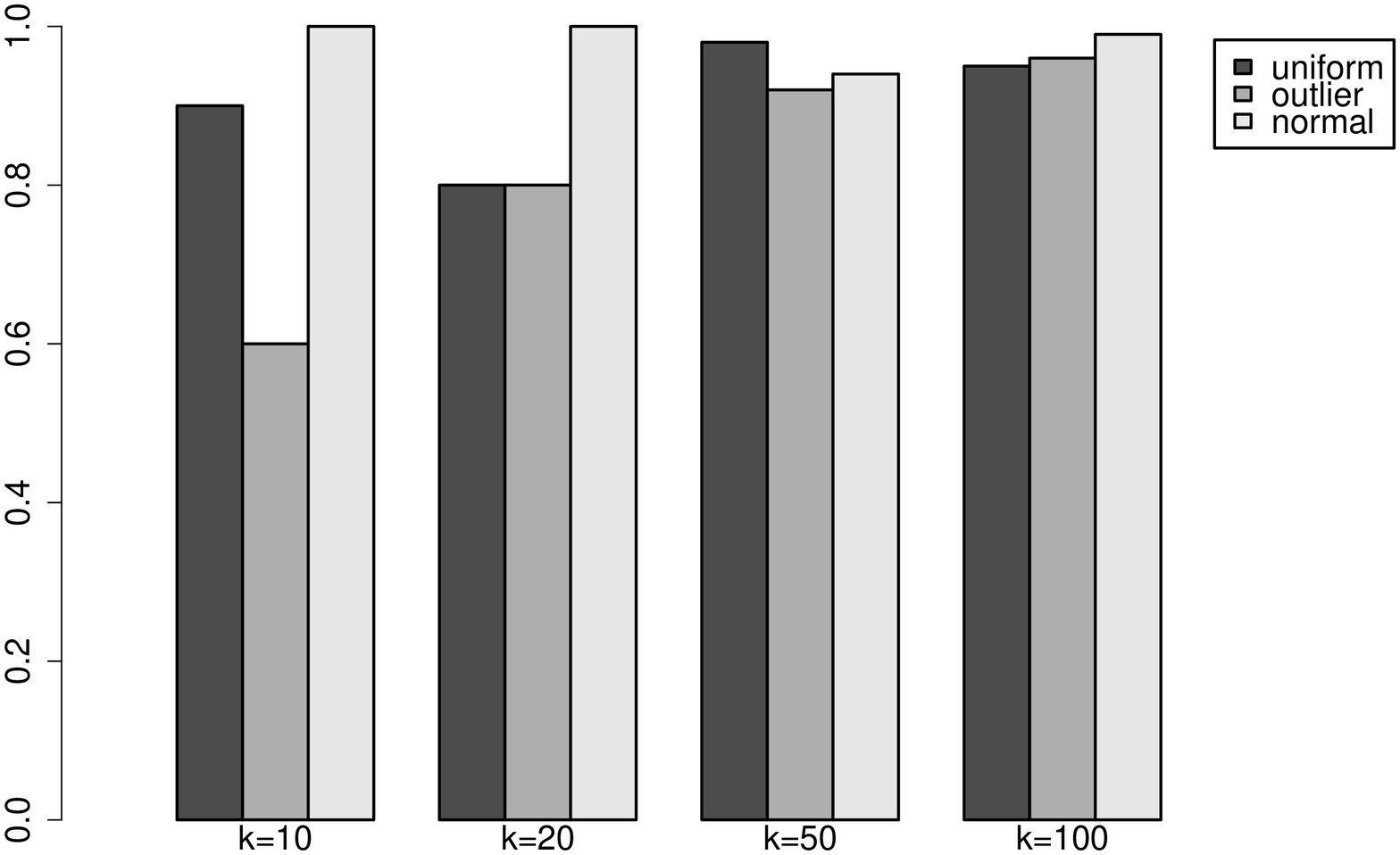}

\bigskip
}

\subfigure[Precision for $\lambda$=95th percentile]{
  \includegraphics[width=0.45\textwidth]{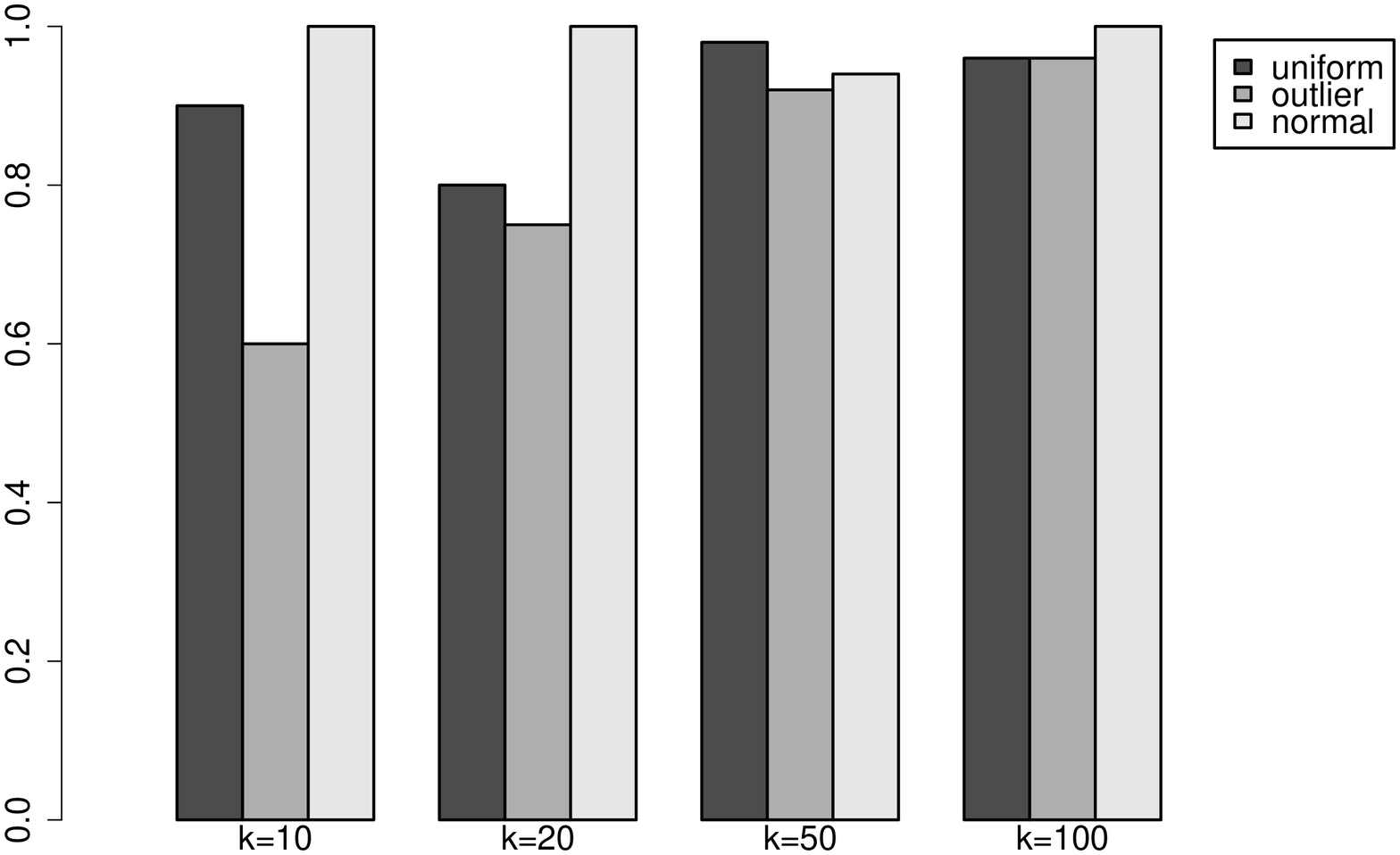}

\bigskip
}
  \caption{The precision quality of \varb as a function of $k$, for
	the three choices of $\lambda$: average, median, and 95th percentile.}
\label{fig:precision-vs-k-varb}    
\end{figure}

Our first experiment evaluates the precision quality: how many of 
the true top $k$ streams are correctly identified by our algorithms. 
Figure~\ref{fig:precision-vs-k-varb} shows the results of this experiment,
where for each data set (of 1000 streams), we asked for the top $k$,
for $k=10, 20, 50, 100$. In Figure~\ref{fig:precision-vs-k-varb} the outlier distribution has parameter $a=0.8$.
We evaluated the precision for each of the three choices of $\lambda$:
average, median, and 95th percentile.
As the figure shows, the precision quality of \varb begins to approach
100\% for $k \geq 50$.
The pattern is similar for average, median, or the 95th percentile.
The precision achieved by \varb on the outlier distribution degrades as the parameter $a$ decreases.
as it can be seen in Figure~\ref{fig:prec_varb_outlier}. This behavior
is easily explained by the fact that the parameter $\alpha$ sets the separation
between outlier and non-outlier streams. The smaller $\alpha$ is, the
fuzzier becomes the separation, therefore \varb success rate in
identifying outliers decreases.

\begin{figure}
  \includegraphics[width=0.45\textwidth]{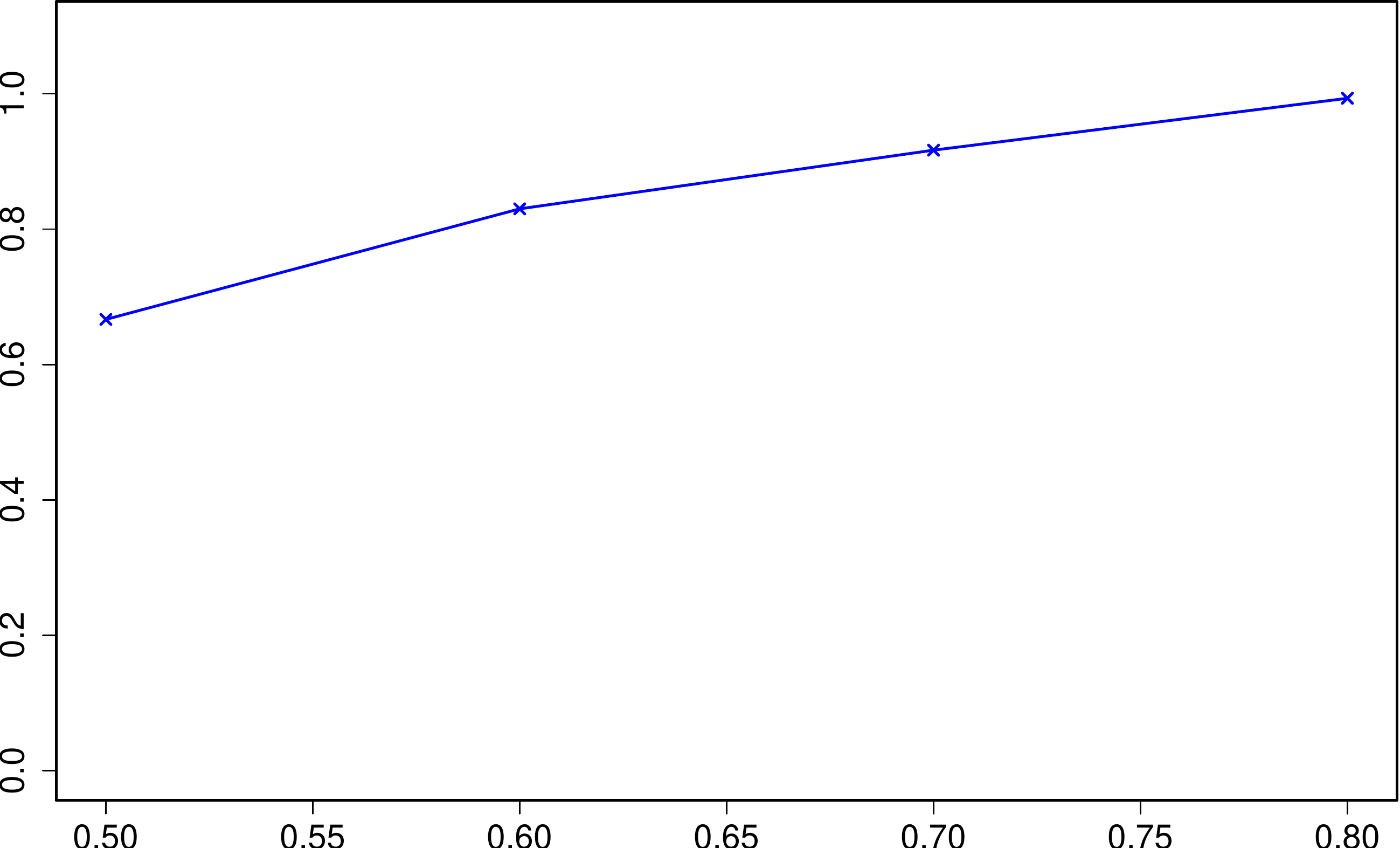}
  \caption{Precision achieved by \varb for different $a$ in the outlier distribution, top-100, $\lambda$=median}
\label{fig:prec_varb_outlier}
\end{figure}

\subsection{Distortion Performance}

\begin{figure}
\subfigure[Distortion for $\lambda$=average]{
  \includegraphics[width=0.45\textwidth]{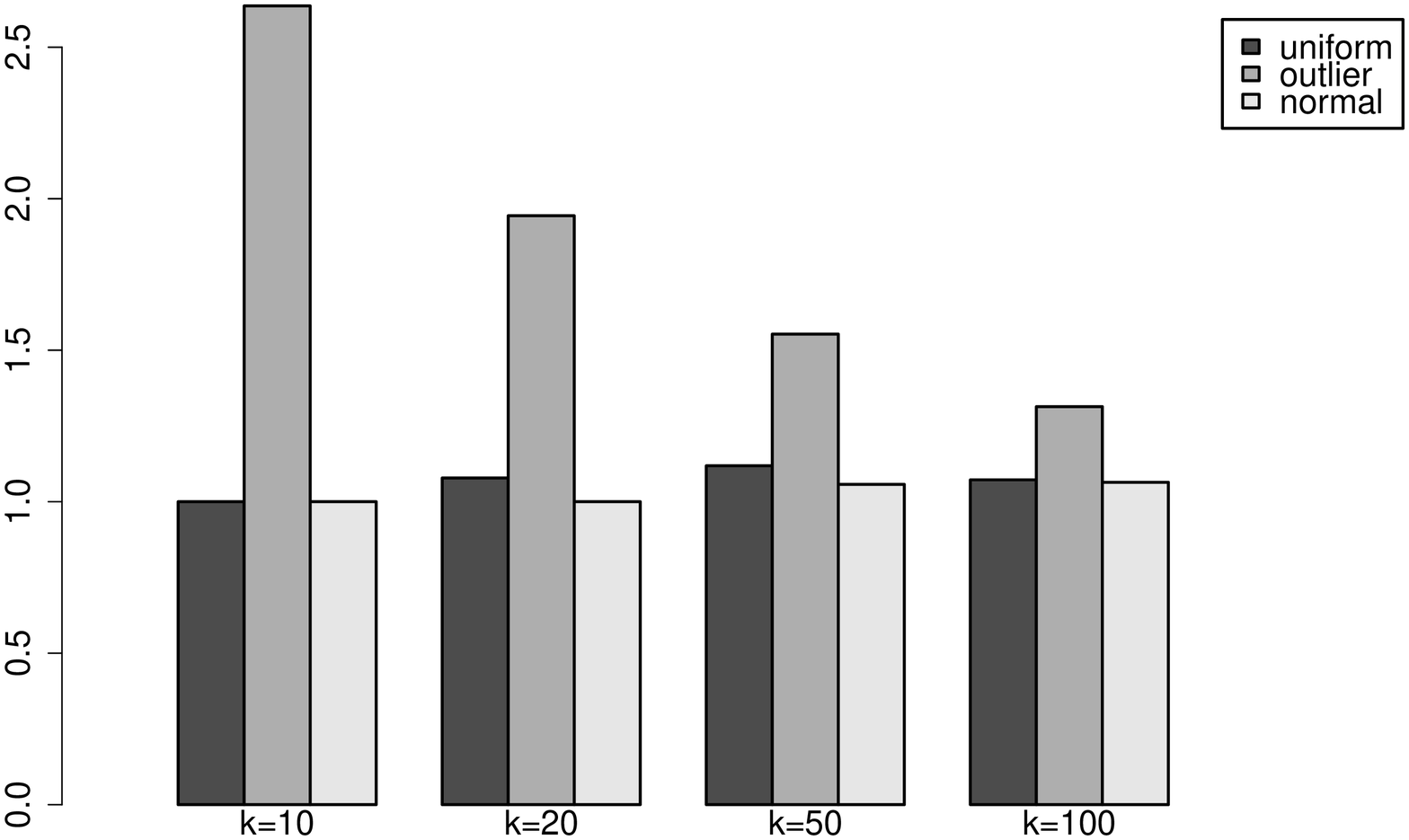}

\bigskip
}
\subfigure[Distortion for $\lambda$=median]{
  \includegraphics[width=0.45\textwidth]{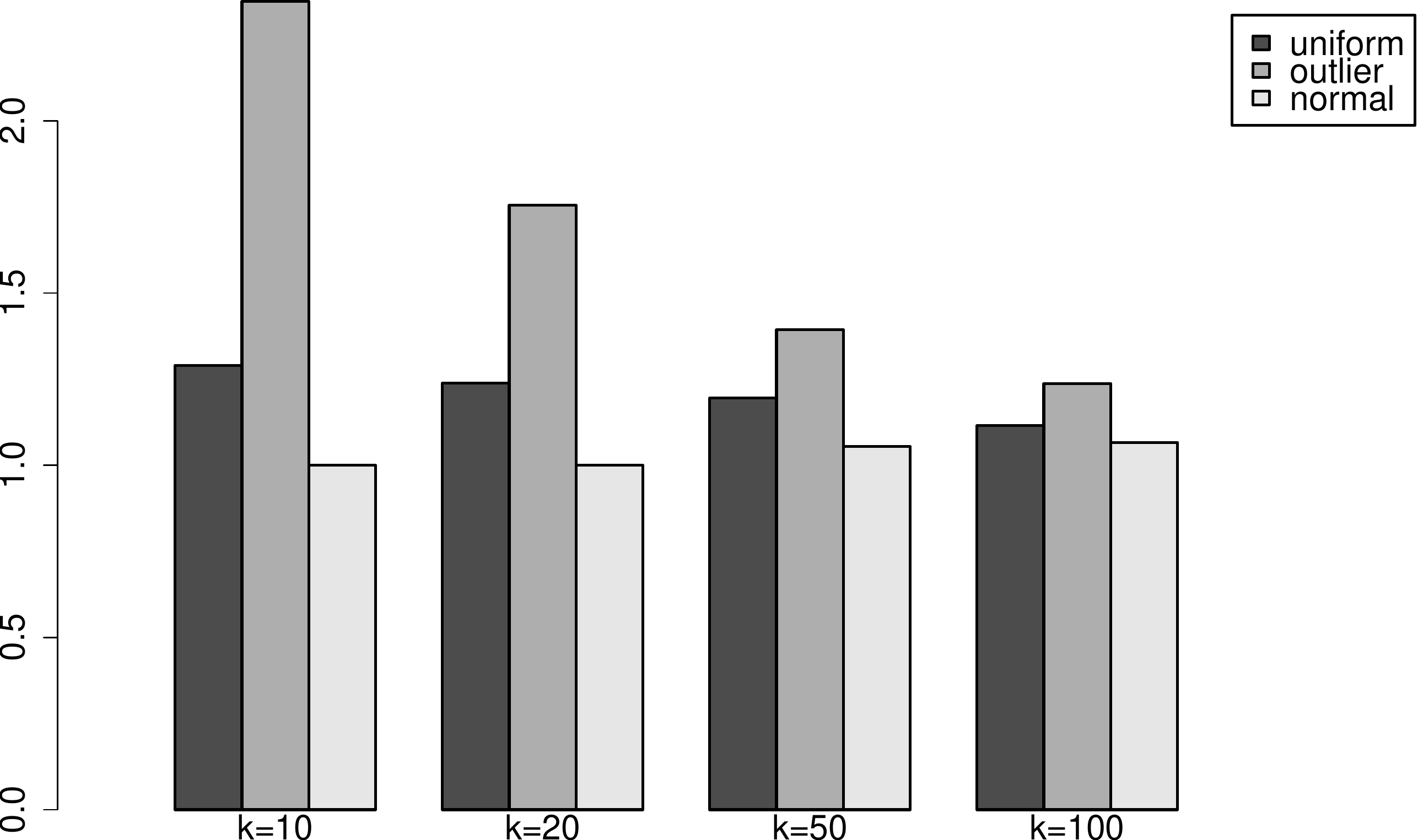}

\bigskip
}
\subfigure[Distortion for $\lambda$=95th percentile]{
  \includegraphics[width=0.45\textwidth]{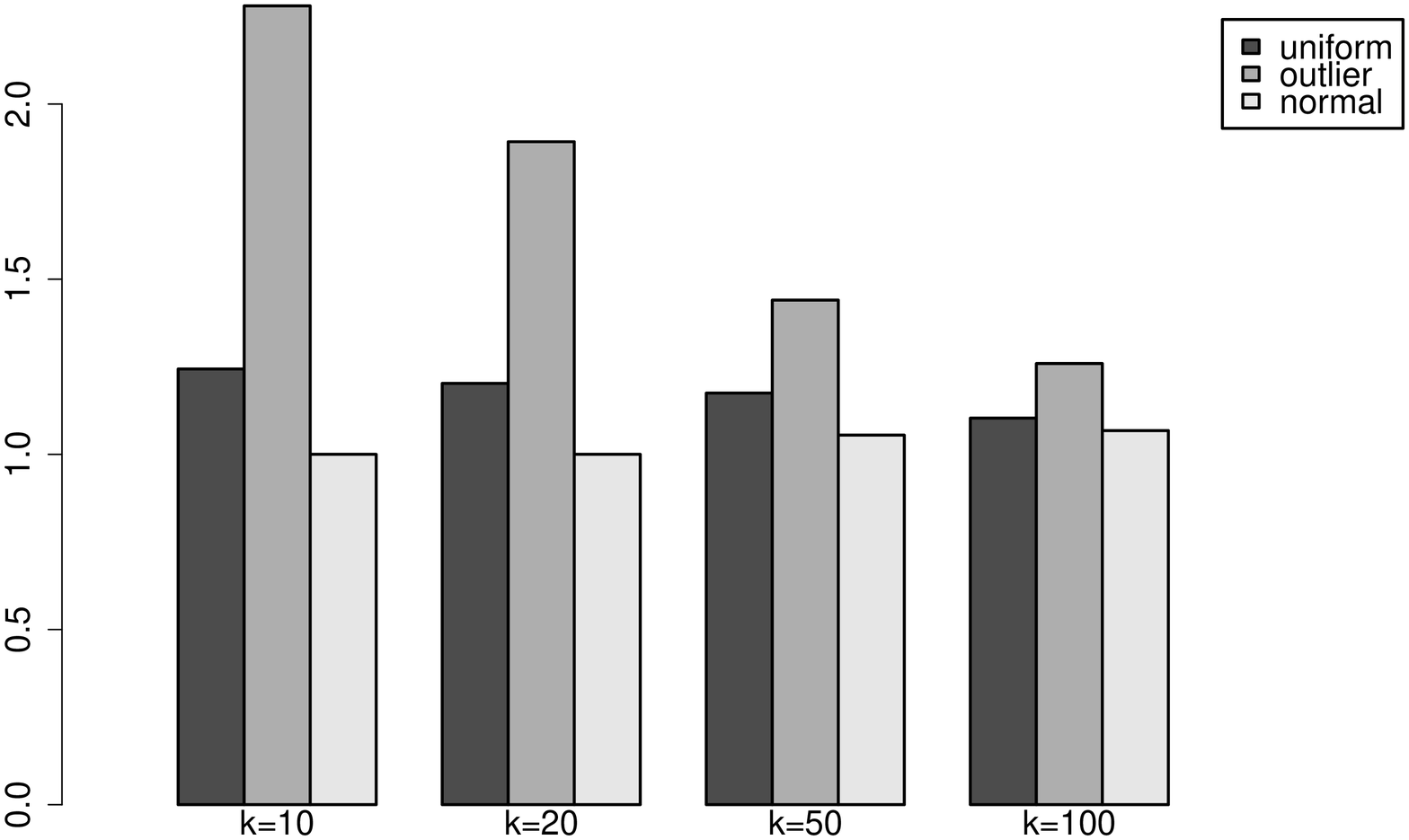}

\bigskip
}
  \caption{The distortion performance of \varb as a function of $k$, 
	for the three choices of $\lambda$: average, median, and 95th percentile.}
\label{fig:dist-vs-k-varb}    
\end{figure}

Our second experiment measures distortion in ranking the top $k$ streams, 
under the three weight functions average, median, and 95th percentile.
The results are shown in Figure~\ref{fig:dist-vs-k-varb}.
For all three data sets, distortion is uniformly small (between 1 and 4), 
even for $k$ as small as 10, and it actually drops to the range 1--2 for $k \geq 50$.

\subsection{Average Value Error}

\begin{figure}
\subfigure[Average value error for $\lambda$=average]{
  \includegraphics[width=0.45\textwidth]{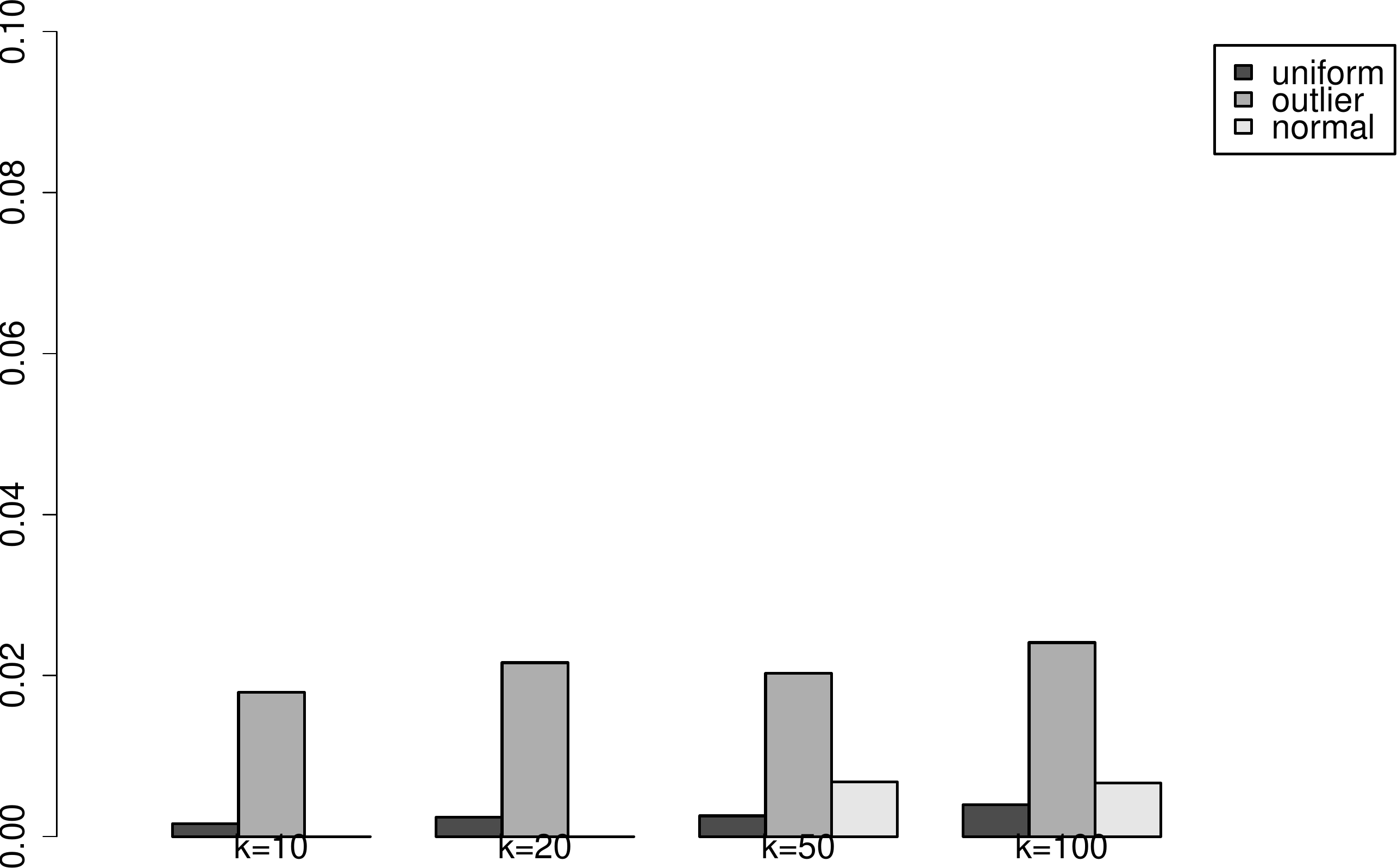} \\[10ex]

\bigskip
}
\subfigure[Average value error for $\lambda$=median]{
  \includegraphics[width=0.45\textwidth]{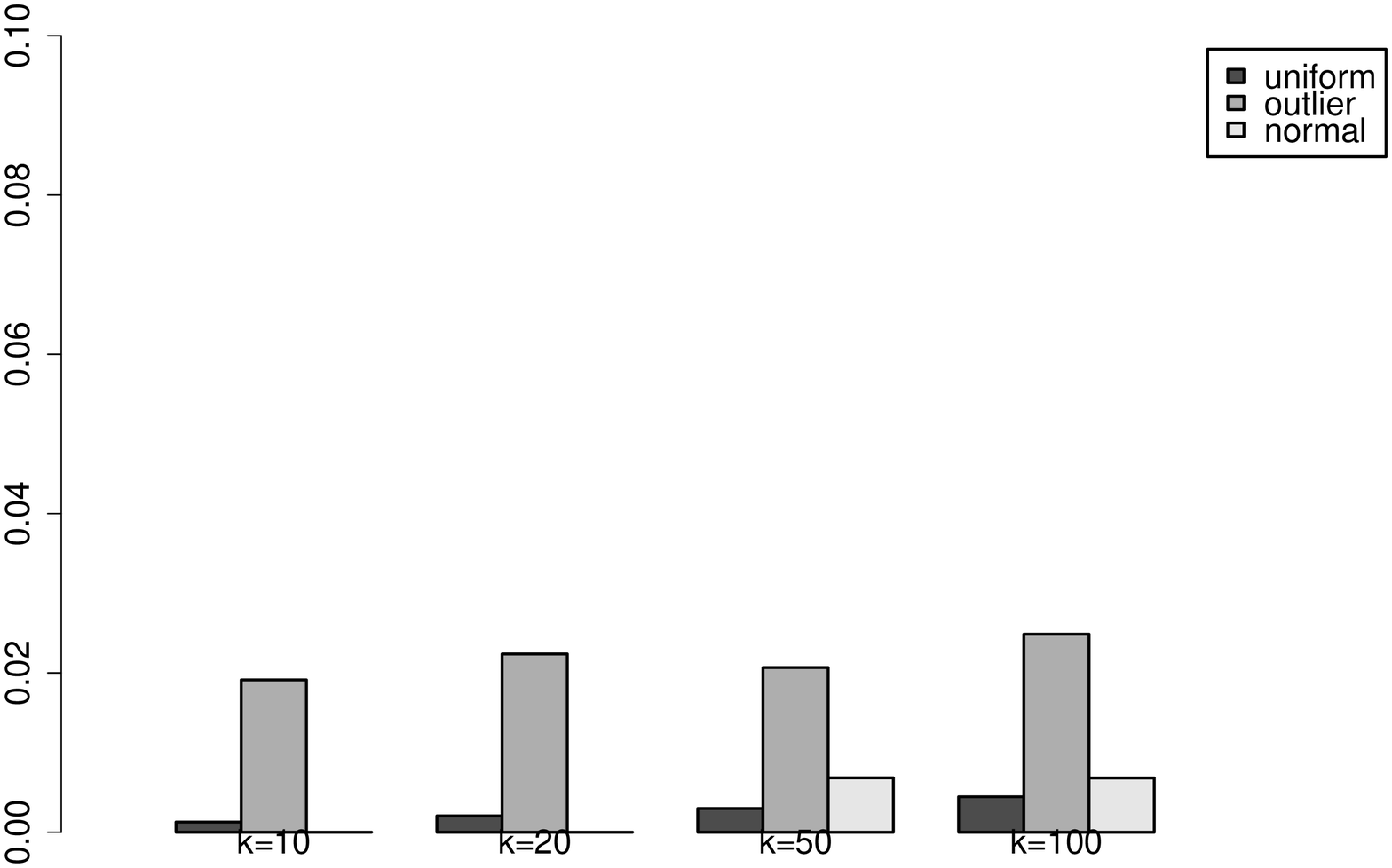} \\[10ex]

\bigskip
}
\subfigure[Precision for median $\lambda$=95th percentile]{
  \includegraphics[width=0.45\textwidth]{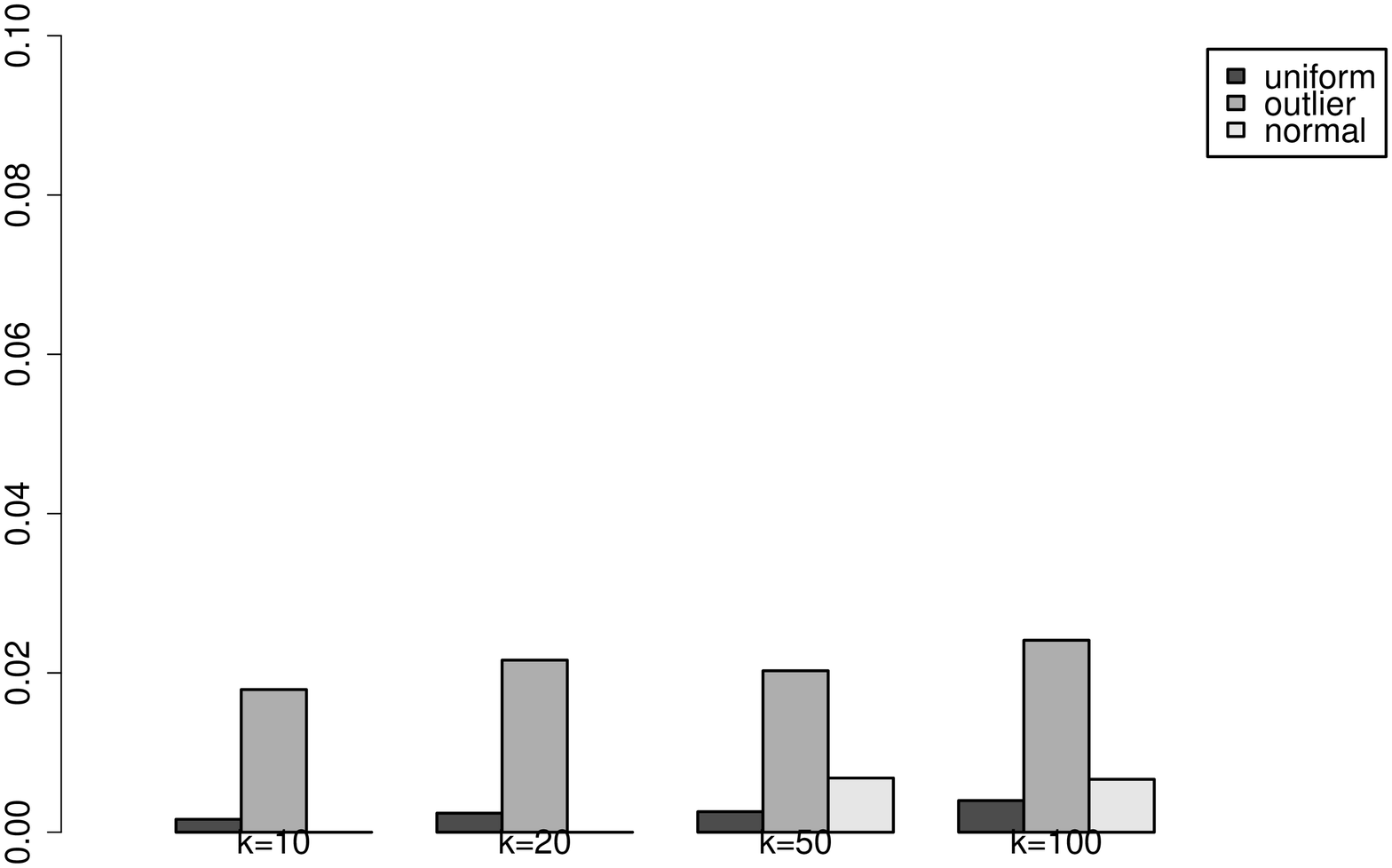}

\bigskip
}
  \caption{The average value error of \varb as a function of $k$,
	for the three choices of $\lambda$: average, median, and 95th percentile.}
\label{fig:aver-vs-k-varb}    
\end{figure}


The previous two experiments have attempted to measure the quality
of our scheme using a rank-based metric.  In this section, we consider
the performance using the value error, as defined earlier. 
The results are shown in Figure~\ref{fig:aver-vs-k-varb}.
For all distributions, the relative error in the value of the top $k$ streams is quite small: of the order
of 1--2\%. Thus, even when the algorithm finds streams outside the true
top $k$, it is identifying streams that are close in value to the true
top $k$. This is especially encouraging because in data without
clear outliers, the meaning of top $k$ is always a bit fuzzy.

\subsection{ExponentialBucket vs. VariableBucket}

In our experiments, we tried both our schemes, \expb and \varb , on all
the data sets, but due to the space limitation, we reported all the
results using \varb only. In this section, we show one comparison
of the two schemes to highlight their relative performance.
Figure~\ref{fig:aver-vs-k-avg-expb} shows the results for the
precision using the median weight, for all three data sets.
The bottom figure is the same one as in Figure~\ref{fig:precision-vs-k-varb}
(middle), while the top one shows the performance of \expb for this experiment.
One can see that in general \varb delivers better precision than \expb.
This was our observation in nearly all the experiments, leading us to
conclude that \varb has better precision and error guarantees than
\expb. This is also consistent with out theory, where we found that
\varb can be shown to have bounded rank error guarantee while
\expb could not.
On the other hand, \expb does have a memory advantage: its data structure
consistently was more space-efficient that that of \varb, so when space
is a major constraint, \expb may be preferable. However, the space usage of
\varb itself is not prohibitive, as we show in the following experiment.

\begin{figure}
\centering
\subfigure[Precision for median, \expb]{
\includegraphics[width=0.45\textwidth]{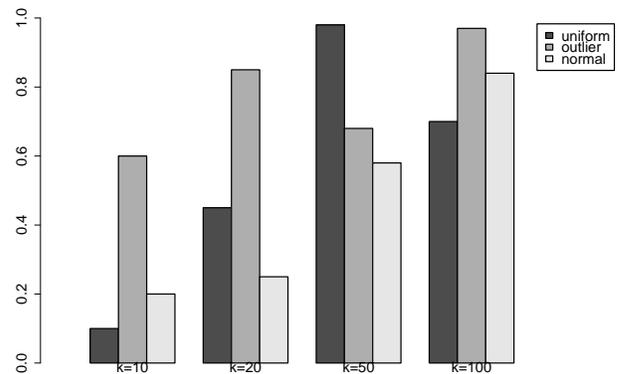}

\bigskip
}
\subfigure[Precision for median, \varb]{
\includegraphics[width=0.45\textwidth]{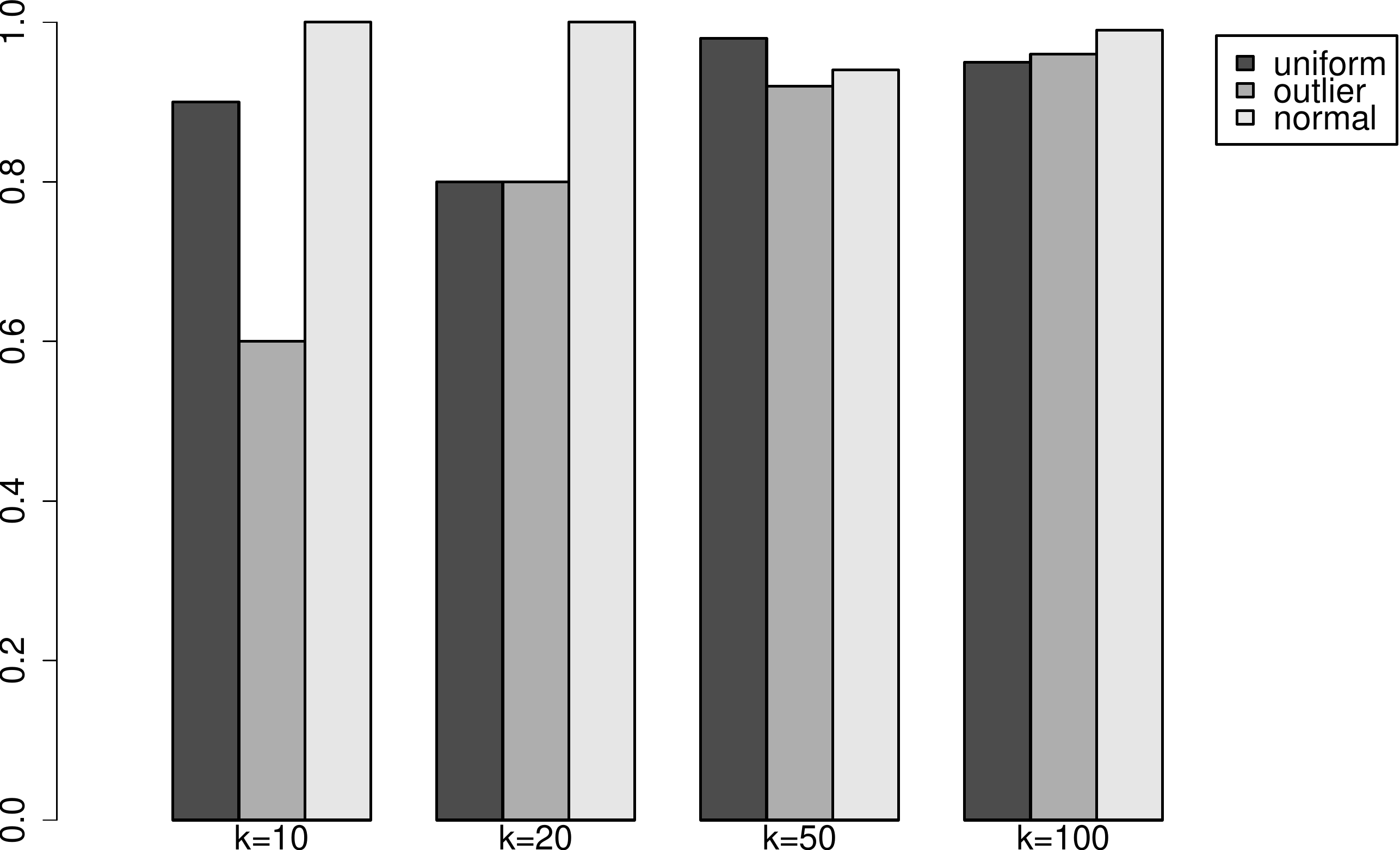}

\bigskip
}
\caption{A comparison of \expb with \varb. The \expb is more space-efficient
  but consistently does worse than \varb. This experiment shows,
  side-by-side, the results of the precision quality experiment
  for the median weight, using the two schemes.}
\label{fig:aver-vs-k-avg-expb}
\end{figure}



\subsection{Memory Usage}

In this experiment, we evaluated how the memory usage of \varb 
scales with the size of the braid. In theory, the size of \varb does not
grow with $m$, the number of streams, or the size of individual streams.
However, theoretical bounds on the space size are highly pessimistic, so
used this experiment to evaluate the space usage in practice.
In our implementation of \varb we used a Count-Min sketch with
depth 64 and width 64. We then built \varb for number of streams
varying from $m=1000$ to $m=10,000$, and Figure~\ref{fig:size}
plots the memory usage vs. the number of streams. As predicted, the
data structure size remains virtually constant, and is about 2 MB.

\begin{figure}
  \includegraphics[width=0.4\textwidth]{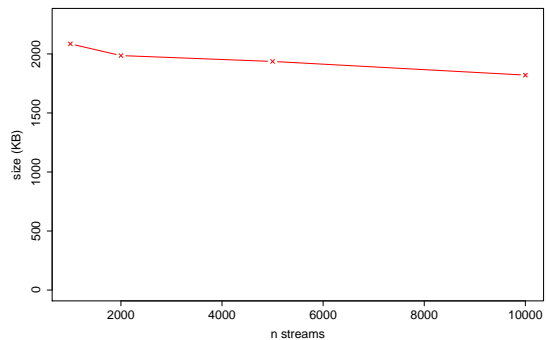}
  \caption{Data structure size as a function of the braid size.}
\label{fig:size}
\end{figure}

\section{Conclusion}
\label{sec:conc}

We investigated the problem of tracking outlier streams in a large set
(braid) of streams in the one-pass streaming model of computation,
using a variety of natural measures such as average, median, or
quantiles.  These problems are motivated by monitoring of performance
in large, shared systems.  We show that beyond the simplest of the
measures (max or min), these problems immediately become provably hard
and require space linear in the braid size to even approximate. It
seems surprising that the problem remains hard even for such minor
extensions of the max as the ``second maximum'' or the spread ($\max
-\min$), or that even highly structured streams with the round robin
order remain inapproximable.  We also propose two heuristics,
{\expb} and {\varb}, analyzed their performance guarantees and
evaluated their empirical performance.

There are several directions for future work. For instance,
we observed that the different Count-Min sketches are used quite unevenly. 
Some sketches are populated to the point of saturation, making their error 
estimates quite bad while others are hardly used.  This suggest that one 
could improve the performance of our data structures by an adaptive 
allocation of memory to the different sketches so that heavily trafficked 
sketches receive more memory than others.
%


\bibliographystyle{plain} \bibliography{biblio}

\end{document}